\documentclass[lettersize,journal]{IEEEtran}
\usepackage{amsmath,amsfonts}
\usepackage{algorithmic}
\usepackage{algorithm}
\usepackage{array}
\usepackage[caption=false,font=normalsize,labelfont=sf,textfont=sf]{subfig}
\usepackage{textcomp}
\usepackage{stfloats}
\usepackage{url}
\usepackage{verbatim}
\usepackage{graphicx}
\usepackage{cite}
\usepackage{mathrsfs}

\usepackage{amssymb}
\usepackage{balance}
\usepackage{bm}
\usepackage{color}
\usepackage{threeparttable} 
\usepackage{diagbox}
\usepackage{makecell}
\usepackage{arydshln}

\newtheorem{thm}{Theorem}

\newtheorem{defin}{Definition}

\newtheorem{crly}{Corollary}
\newtheorem{eg}{Example}
\newtheorem{remark}{Remark}

\begin{document}

\title{SCMA Inspired Sparse Vector Coding: An Enhanced URLLC Transmission Scheme}

\author{Zhen-Ming Huang,~\IEEEmembership{Graduate Student Member,~IEEE}, Zilong Liu,~\IEEEmembership{Senior Member,~IEEE}, \\Leila Musavian,~\IEEEmembership{Senior Member,~IEEE}, and Chao-Yu~Chen,~\IEEEmembership{Senior Member,~IEEE}
\thanks{Zhen-Ming Huang is with the Institute of Computer and Communication Engineering,  National Cheng Kung University, Tainan 701, Taiwan, R.O.C. (e-mail: n98101012@gs.ncku.edu.tw).

Zilong Liu and Leila Musavian are with the School of Computer Science and Electronic Engineering, University of Essex, CO4 3SQ Colchester, U.K. (e-mail: zilong.liu@essex.ac.uk; leila.musavian@essex.ac.uk).
  			
Chao-Yu Chen is with the Department of Electrical Engineering and the Institute of Communications Engineering, National Cheng Kung University, Tainan 701, Taiwan, R.O.C. (e-mail: super@mail.ncku.edu.tw).
  	}
}


\IEEEpubid{0000--0000/00\$00.00~\copyright~2021 IEEE}

\maketitle

\begin{abstract}
Sparsity is inherently exploited in sparse code multiple access (SCMA) and sparse vector coding (SVC), yet the interaction between these two has not been explored before. It is intriguing to ask if one can be used to improve the other, and vice versa. In this work, we present a novel SCMA inspired SVC scheme, called SCMA-SVC, for enhanced ultra-reliable low-latency communications. Our key idea is to exploit the sparse pattern and multidimensional constellation nature of SCMA, with which one is able to further enlarge the minimum Euclidean distance (MED) of the corresponding SVC codebooks. Such an innovation allows us to harvest the multiuser coding gain and the constellation shaping gain which are pertinent to SCMA. Moreover, by applying random phase rotations to the sparse vectors, it is shown that the proposed SCMA-SVC achieves full diversity order over Rayleigh fading channels. Under maximum likelihood (ML) decoding, the proposed SCMA-SVC demonstrates  remarkable error rate performances over both Gaussian and Rayleigh fading channels. Additionally, we develop a low-complexity decoder that exploits the structural sparsity of SCMA-SVC while maintaining near-ML performance. Simulation results demonstrate that the proposed SCMA-SVC achieves significantly improved reliability over the existing SVC variants.
\end{abstract}

\begin{IEEEkeywords}
Sparse vector coding (SVC), sparse code multiple access (SCMA), short-packet transmission, ultra-reliable low-latency communications (URLLC).
\end{IEEEkeywords}

\section{Introduction}

\subsection{Background}

\IEEEPARstart{U}{ltra}-reliable low-latency communication (URLLC) is one of the major use cases in future communication systems, targeting mission-critical applications, such as autonomous vehicles, telemetry, remote surgery, and factory automation \cite{Ji_18_UELLC, Shirvanimoghaddam_19, Yue_24, Rowshan_24}. In such applications, the transmitted data are often in short packets, carrying certain high-priority command/control signals, life-critical information, or emergency rescue messages. In the fifth-generation (5G) networks, stringent performance requirements should be met by those URLLC applications, including a target block error rate (BLER) of $10^{-5}$ and an end-to-end latency constraint within $1$ ms. Moving towards 6G, IMT-2030 further extends URLLC to hyper-reliable and low-latency communication, targeting BLER as low as to $10^{-7}$ and latency of $0.1$-$1$ ms \cite{ITU_M2160_2023}. Conventional coding schemes for ultra-high reliability are feasible for sufficiently long block codes only. New design approaches are needed in view of the short-packet transmission nature and the stringent requirements to reliability and latency in URLLC. To this end, in the sequel, we first introduce sparse vector coding (SVC) and sparse code multiple access (SCMA), and then present SCMA inspired SVC (SCMA-SVC).

\IEEEpubidadjcol

\subsection{Related Works}
\subsubsection{SVC}

In 2018, SVC was proposed as a novel coding technique for short-packet URLLC transmission \cite{Ji_18_first}. In SVC, the basic idea is to map the information bits to certain nonzero elements of a sparse vector, with which one can generate the corresponding codeword by multiplying it with a predefined codebook matrix. The decoding of SVC can be regarded as a compressed sensing (CS) problem, whereby a codebook matrix with low mutual coherence is required to achieve reliable recovery. At the receiver, one may use multipath matching pursuit~\cite{Kwon_14} and orthogonal matching pursuit~\cite{Tropp_07} to recover the sparse vector from which the information bits are subsequently decoded. In recent years, SVC has been extended to various communication scenarios, including time division duplex systems \cite{Kim_TDD_20}, non-orthogonal multiple access \cite{Zhang_NOMA_21, Sabapathy_23}, multi-input multi-output systems \cite{Nam_MIMO_22,Zhang_MIMO_21}, orthogonal time frequency space modulation \cite{Ji_21}, integrated sensing and communication \cite{Zhang_ISAC_22}, high-mobility communications \cite{Zhang_Yanfeng_23}, and grant-free access \cite{Luo_24}. To further improve the reliability for URLLC transmissions, several SVC variants have been developed. In \cite{Kim_20} and \cite{Zhang_22}, enhanced SVC (ESVC) and constellation rotation-based SVC (CR-SVC) were proposed. Specifically, ESVC maps the information bits onto both the nonzero positions and their corresponding symbols of a sparse vector, whereas CR-SVC embeds a constellation rotation into the nonzero elements of the sparse vector. Subsequently, the index redefinition-based SVC (IR-SVC) was introduced in \cite{Zhang_23}, where the sparse vector mapping is redesigned through an index redefinition strategy. In general, these SVC variants are designed to facilitate more reliable and efficient sparse-vector recovery at the receiver, thereby improving transmission reliability and reducing decoding latency. Following these developments, numerous SVC variants have been proposed, with research efforts devoted to advanced SVC-based encoding schemes and corresponding decoding algorithms \cite{Zhang_23_China_COM,Yang_24,Mow_24_CWC,Zhang_24,Zhang_25,Zhang_25_SE_SVC, Zhang_25_block, Zhang_25_GLOBECOM, Zhang_25_One_Hot, Yang_26_RM}. In particular, a block SVC (BSVC) scheme was recently proposed by designing a block sparse mapping pattern to enhance the decoding performance \cite{Zhang_25_block}. In \cite{Mow_24_CWC}, constant weight codes (CWCs) were employed to construct the sparse vector set for SVC encoding, where the construction is based on the pairwise distances between sparse vectors. Another approach is to reduce the mutual coherence of the codebook matrix \cite{Fan_SVC_23, Yang_25, Huang_26_ICC}. A lower mutual coherence enables more accurate sparse vector recovery by CS decoders, thereby enhancing decoding performance. Specifically, the designs in \cite{Fan_SVC_23, Yang_25} were based on optimization frameworks that minimize the mutual coherence, whereas \cite{Huang_26_ICC} presented a structured construction based on periodic quasi-complementary pairs, leading to deterministically low mutual coherence.

\subsubsection{SCMA}

In parallel to the SVC research, sparsity is inherently incorporated into the SCMA codebook design through a predefined sparse matrix (similar to the sparse parity check matrices used in low-density parity check codes) \cite{Hosein_SCMA_first, Taherzadeh_SCMA_first}. Thanks to the codebook sparsity, message passing algorithm (MPA) can be exploited at the receiver to carry out efficient SCMA decoding. Secondly, SCMA is able to support higher spectrum efficiency (compared to legacy orthogonal multiple access schemes) due to overloaded multiuser transmission. In SCMA, multi-dimensional constellation design plays an important role for improved error rate performance \cite{Fan_15, Fan_16, Yu_18, Zeina_19, Chen_20, Zhang_Di_21, Liu_22_Power-Imbalanced, Wen_22}. A common approach is to first construct a multi-dimensional mother constellation (MC), with which sparse codebooks are generated through certain operations over the MC such as permutation, interleaving, and phase rotation. In general, the MC and the associated operations are designed to minimize the pairwise error probability (PEP), and metrics such as the minimum Euclidean distance (MED), minimum product distance (MPD), and diversity order are commonly considered as design criteria. A systematic MC design framework was first introduced in \cite{Taherzadeh_SCMA_first}. To maximize the MED, several MC designs based on star-QAM, constellation rotation, and interleaving were proposed in \cite{Fan_15, Fan_16, Yu_18}. Alternative MC design methods by taking advantage of the golden-angle modulation \cite{Zeina_19}, channel-specific near-optimal codebook design \cite{Chen_20}, uniquely decomposable constellation-group \cite{Zhang_Di_21}, and power imbalance across different codebooks \cite{Liu_22_Power-Imbalanced}, were subsequently developed. Additionally, enhanced MC designs targeting both MED and MPD were investigated in \cite{Wen_22}.

\subsection{Motivations and Contributions}

Although a number of SVC variants have been proposed in recent years, existing studies mostly focus on sparse vector mappings and corresponding decoding algorithms. From a channel coding perspective, the distance properties of the resulting codebooks have received little research attention, yet this is essential for understanding the SVC transmission reliability. With the aid of certain constant weight codes, the MED of single-section sparse regression codes (which is essentially SVC) was studied in \cite{Mow_24_CWC}. However, a systematic characterization of the pairwise Euclidean distance spectrum of the resulting codewords remains unexplored. Moreover, the impact of the underlying sparse vector structure on the SVC codebook distance properties has not been well understood. In contrast, as stated in the previous subsection, SCMA codebook design has been extensively studied from a distance-oriented perspective, whereby a primary objective is to design sparse codebooks with large MED. Since both SCMA and SVC are sparsity driven coding schemes, it is intriguing to understand how to exploit the existing know-hows of SCMA to further enhance the SVC design, and vice versa. For the former, specifically, how can we achieve point-to-point URLLC transmission by properly extracting the multiuser coding gain and the constellation shaping gain associated to SCMA? 

These motivate us to study SCMA-SVC for enhanced short-packet URLLC transmissions in this work. Our key idea is to jointly design the  sparse patterns and constellation symbols of SVC sparse vectors according to the sparse indicator matrix and multi-dimensional MC of SCMA, respectively. As a result, we show that the codeword distance properties of SCMA-SVC can be explicitly characterized. Furthermore, we investigate the diversity properties of SCMA-SVC over Rayleigh fading channels and how phase rotation applied to the nonzero entries of sparse vectors reshapes the codeword difference structure for achieving full diversity order. Simulation results demonstrate that the proposed SCMA-SVC achieves superior BLER performance compared with existing SVC schemes over both Gaussian and Rayleigh fading channels. The main contributions of this paper are summarized as follows:


\begin{itemize}
    
    \item We establish a fundamental relationship between the Euclidean distance spectrum of SVC codebooks and that of the underlying sparse vector set. In particular, we show that the pairwise codeword distances are fully determined by the pairwise distances among sparse vectors.
    
    \item We propose a novel SCMA-SVC scheme that jointly exploits sparse-pattern separation and symbol separation to construct sparse vector sets with analytically characterized distance properties. It is found that the resulting MED is determined by both the SCMA sparse matrix structure and the multi-dimensional constellation, yielding a systematic framework for improving codebook distance properties.

    \item We propose a low-complexity near-maximum-likelihood (ML) decoder that leverages the sparse pattern indexing mechanism of SCMA-SVC. By employing a generalized likelihood ratio test (GLRT) to identify a small set of candidate patterns, ML search can be performed over a substantially reduced candidate set, thereby significantly reducing the search complexity compared to the exhaustive ML detection while maintaining near-ML performance.
    
    
    
  

\end{itemize}

 




The remainder of this paper is organized as follows. We briefly introduce SVC and SCMA codebook design in Section~\ref{Sec:Introcuction to SCMA and SVC}. Section~\ref{Sec:SCMA_SVC} presents the proposed SCMA-SVC scheme and analyzes fundamental properties. Our low-complexity near-ML decoder and the corresponding complexity analysis are presented in Section~\ref{Sec:SCMA_SVC_decoder}. The performance comparisons are provided in Section~\ref{Sec:Simulation}. Finally, Section~\ref{Sec:Conclusion} ends this paper with the conclusion.

\subsection{Notations}
The notations used throughout this paper are as follows:
\begin{itemize}
    \item Lower-case letters denote scalars, boldface lower-case letters denote vectors,
and boldface upper-case letters denote matrices. Calligraphic letters denote sets. For example, $a$, $\bm a$, $\bm A$, and $\mathcal{A}$, respectively;
    \item $\bm A^{T}$, $\bm A^{H}$, and $\bm A^{-1}$ denote the transpose, Hermitian, and inverse of the matrix $\bm A$, respectively;
    \item $|\mathcal{A}|$ denotes the cardinality of the set $\mathcal{A}$;
    \item $\|\cdot\|_{p}$ denotes the $l_p$-norm;
    \item $\binom{N}{K}$ denotes the number of combinations of choosing $K$ out of $N$;
    \item $\left \lfloor \cdot \right\rfloor$ denotes the floor operation;
    \item $\bm{I}_N$ denotes the $N \times N$ identity matrix;
    \item $\text{diag}(\bm{A})$ denotes the vector composed of the diagonal elements of the matrix $\bm A$;
    \item $\mathcal{A} \setminus \mathcal{B}$ denotes the set difference, i.e., the set of elements in $\mathcal{A}$ but not in $\mathcal{B}$.
\end{itemize}

\section{Background and Preliminaries}\label{Sec:Introcuction to SCMA and SVC}

\subsection{Sparse Vector Coding}

The encoding process of the SVC scheme is depicted in Fig.~\ref{SVC_scheme}. In SVC, each information vector $\bm{m}$ is first mapped to a sparse vector $\bm{s}=(s_0,s_1,\ldots,s_{N-1})^{T}$ of length $N$ with $K$ nonzero entries. By selecting $K$ positions out of $N$, the number of distinct sparse patterns is $\binom{N}{K}$, which enables the transmission of at least
\begin{equation}
    \left \lfloor \log_2\binom{N}{K} \right\rfloor \geq b_t
\end{equation}
bits. After sparse mapping, the transmitted codeword $\bm{x}$ is generated by the codebook matrix. Specifically, let $\bm{C} = \begin{bmatrix} \bm{c}_{0},\bm{c}_{1},\cdots,\bm{c}_{N-1} \end{bmatrix}$ be the codebook matrix where $\bm{c}_n = (c_{0,n},c_{1,n},\ldots,c_{L-1,n})^{T}$ represents the $n$-th spreading sequence of length $L$. The transmitted codeword is then expressed as
\begin{equation}
\bm{x} = \bm{C}\bm{s} = \sum_{n=0}^{N-1} \bm{c}_n s_n.
\end{equation}
In general, the elements of the transmitted codeword form a QAM constellation of size $Q$. The normalization factor is given by $ \sqrt{\tfrac{2(Q-1)}{3}}$. Taking $b_t=3$, $N=5$, $K=2$, and $Q=4$ as an example, the sparse mapping rules and the corresponding codewords are demonstrated in Table~\ref{SVC_eg_mapping}. 


\begin{figure}[!t]
    \centering
    \begin{center}
        \includegraphics[width=0.5\textwidth]{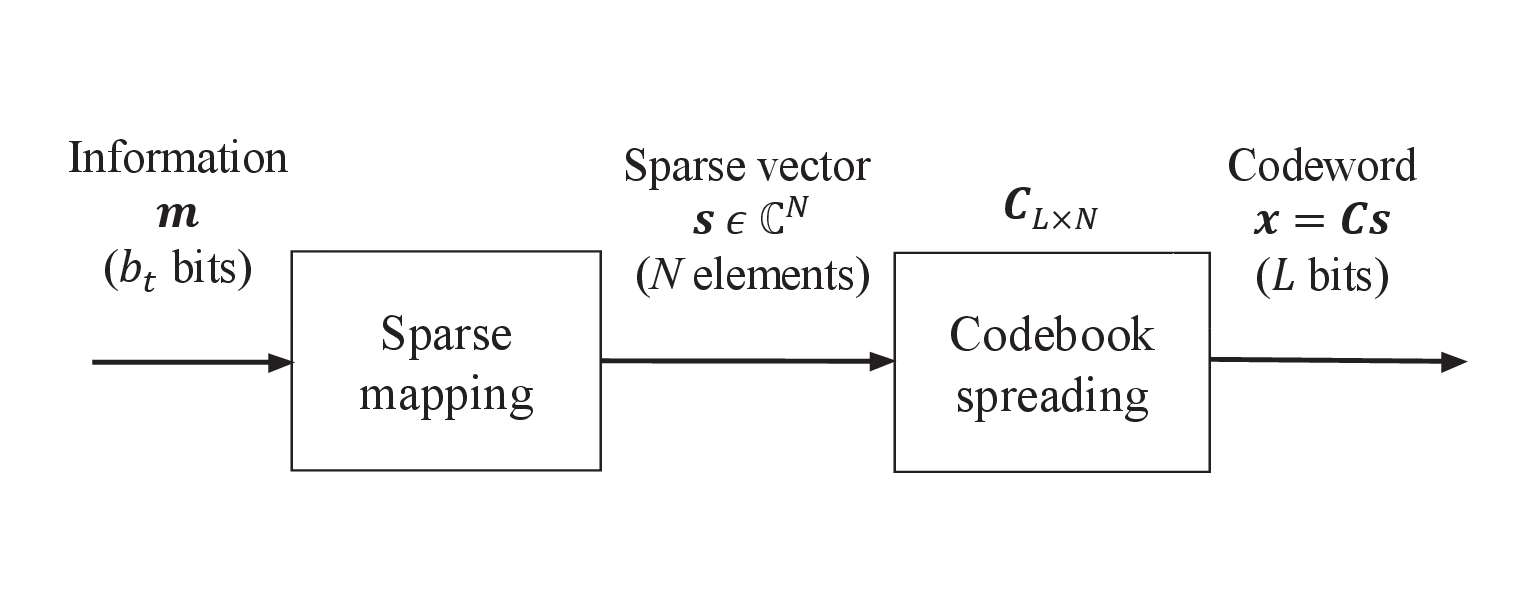}
        \caption{SVC encoding process.}\label{SVC_scheme}
    \end{center}
\end{figure}

\begin{table}[!t]
\centering
\caption{An Example of Sparse Mapping With $b_t=3$, $N=5$, $K=2$ and $Q=4$.\label{SVC_eg_mapping}}
\renewcommand{\arraystretch}{1.3}
\begin{tabular}{|c|c|c|} \hline
Information vector $\bm{m}$ & Sparse vector $\bm{s}$     &Codeword  $\bm{x}$ \\ \hline
$000$ & $(1/\sqrt{2})(1,j,0,0,0)^{T}$ & $(1/\sqrt{2})(\bm{c}_0+j\bm{c}_1)$ \\ \hline
$001$ & $(1/\sqrt{2})(1,0,j,0,0)^{T}$ & $(1/\sqrt{2})(\bm{c}_0+j\bm{c}_2)$ \\ \hline
$010$ & $(1/\sqrt{2})(0,1,j,0,0)^{T}$ & $(1/\sqrt{2})(\bm{c}_1+j\bm{c}_2)$ \\ \hline
$011$ & $(1/\sqrt{2})(1,0,0,j,0)^{T}$ & $(1/\sqrt{2})(\bm{c}_0+j\bm{c}_3)$ \\ \hline
$100$ & $(1/\sqrt{2})(0,1,0,j,0)^{T}$ & $(1/\sqrt{2})(\bm{c}_1+j\bm{c}_3)$ \\ \hline
$101$ & $(1/\sqrt{2})(0,0,1,j,0)^{T}$ & $(1/\sqrt{2})(\bm{c}_2+j\bm{c}_3)$ \\ \hline
$110$ & $(1/\sqrt{2})(1,0,0,0,j)^{T}$ & $(1/\sqrt{2})(\bm{c}_0+j\bm{c}_4)$ \\ \hline
$111$ & $(1/\sqrt{2})(0,1,0,0,j)^{T}$ & $(1/\sqrt{2})(\bm{c}_1+j\bm{c}_4)$ \\ \hline
\end{tabular}
\end{table}

\subsection{SCMA Codebooks}

Consider an SCMA system with $J$ users over $N$ orthogonal resource nodes (e.g., subcarriers in a multicarrier system), whereby $J>N$ so that the system needs to support an overloading factor of $J/N$. Each user (say, User $j$, where $0\leq j \leq J-1$) is assigned a sparse codebook 
\begin{equation}
    \bm{S}_j=\begin{bmatrix} \bm{s}_{j,0},\bm{s}_{j,1},\cdots,\bm{s}_{j,Q-1} \end{bmatrix},
\end{equation}
consisting of $Q$ sparse codewords of length $N$, where each codeword contains only $K$ nonzero entries, satisfying $K < N$.


The sparse structure of the SCMA codebooks across the $J$ users can be characterized by a sparse indicator matrix 
\begin{equation}
\bm{F} = \begin{bmatrix} \bm{f}_{0},\bm{f}_{1},\cdots,\bm{f}_{J-1} \end{bmatrix},
\end{equation}
where each column $\bm{f}_j$ is obtained from the binary mapping matrix $\bm {V}_{j}$ of size $N \times K$, i.e., 
\begin{equation}
    \bm{f}_j=\text{diag}(\bm {V}_{j}\bm {V}_{j}^{T}).
\end{equation}
Similar to the parity check matrix in a low density parity check code, the sparse indicator matrix is typically designed to avoid short cycles, particularly cycles of length four, in the corresponding Tanner graph. As a result, any two columns overlap in at most one position, thereby improving the separation among different sparse patterns. A sparse indicator matrix example for $J=6$, $N=4$, and $K=2$ is given by
   \begin{equation} \bm{F}= \begin{bmatrix} 1&1&1&0&0&0\\ 1&0&0&1&1&0\\ 0&1&0&1&0&1\\ 0&0&1&0&1&1\\ \end{bmatrix}.\end{equation}
The corresponding binary mapping matrices are
   \begin{equation}\begin{aligned}\bm {V}_{0}=&\begin{bmatrix} 1 & 0\\ 0&1\\ 0 & 0\\ 0 &0\\ \end{bmatrix},~~\bm {V}_{1} = \begin{bmatrix} 1 & 0\\ 0 & 0\\ 0 & 1\\ 0 & 0\\ \end{bmatrix},~\bm {V}_{2} = \begin{bmatrix} 1 & 0\\ 0 & 0\\ 0 & 0\\ 0 & 1\\ \end{bmatrix}, \\ \bm {V}_{3}=&\begin{bmatrix} 0 & 0\\ 1 & 0\\ 0&1\\ 0 &0\\ \end{bmatrix},~~\bm {V}_{4} = \begin{bmatrix} 0 & 0\\ 1 & 0\\ 0 & 0\\ 0&1\\ \end{bmatrix},~\bm {V}_{5} = \begin{bmatrix} 0&0\\ 0 & 0\\ 1 &0\\ 0&1\\ \end{bmatrix}.\end{aligned}\end{equation}
Each binary mapping matrix $\bm{V}_j$ specifies the locations of the nonzero entries associated with the sparse pattern $\bm{f}_j$.

For the $j$-th user, the codebook can be expressed as
\begin{equation} 
\bm{S}_{j}=\bm {V}_{j} \bm{\Delta}_{j} \bm{A}_{\text {MC}},~~\text {for}~j=0,1,\cdots ,J-1,
\end{equation}
where $\bm{\Delta}_{j}$ denotes its constellation operator, e.g., a permutation or rotation applied to the multi-dimensional constellation $\bm{A}_{\text{MC}}=\begin{bmatrix}\bm{a}_0,\bm{a}_1,\ldots,\bm{a}_{Q-1}\end{bmatrix}$, in which $\bm{a}_q \in \mathbb{C}^K$ denotes the $q$-th constellation point of~$\bm{A}_{\text{MC}}$. As an illustrative example, for $K=3$ and $Q=4$, the MC is given in \eqref{A_MC_3_4} where each column represents a constellation point. 

\begin{figure*}[!t]
\begin{equation}\label{A_MC_3_4}
\bm{A}_{\text{MC}} =\begin{bmatrix}\bm{a}_0,\bm{a}_1,\bm{a}_2,\bm{a}_{3}\end{bmatrix}=
\begin{bmatrix}
0.5381 + 0.2092i &  0.2092 - 0.5381i & -0.5381 - 0.2092i & -0.2092 + 0.5381i \\
-0.5597 - 0.1417i &  0.5597 + 0.1417i &  0.1417 - 0.5597i & -0.1417 + 0.5597i \\
-0.3649 - 0.4474i & -0.4474 + 0.3649i &  0.4474 - 0.3649i  & 0.3649 + 0.4474i 
\end{bmatrix}.
\end{equation}
\end{figure*}

Therefore, the transmitted sparse codeword for the $j$-th user can be expressed as 
\begin{equation} 
\bm{s}_{j,q}=\bm {V}_{j} \bm{\Delta}_{j} \bm{a}_q,~~\text {for}~j=0,1,\cdots ,J-1,
\end{equation}
where $\bm{a}_q$ is selected according to $\log_2(Q)$ input bits of the $j$-th user. It is noted that the sparse mapping matrix determines the sparse-pattern structure of the resulting SCMA codebook, whereas the multi-dimensional MC determines the symbol values associated with each sparse pattern.

\section{SCMA Inspired SVC (SCMA-SVC)}\label{Sec:SCMA_SVC}

In this section, we introduce the proposed SCMA-SVC scheme. The basic idea is to incorporate the sparse codebook design principles of SCMA to generate new SVC codebooks with large MED and full diversity order. 

As shown in Fig.~\ref{SVC_scheme}, each transmitted codeword in SVC can be viewed as a linear mapping from the sparse vector through the codebook matrix. We first establish a relationship between the MED of the sparse vector set and that of the transmitted codebook. 


Let $\mathcal{S}=\{\bm{s}_0,\bm{s}_1,\ldots,\bm{s}_{M-1}\}$ denote a sparse vector set, where each sparse vector $\bm{s}_l=(s_{l,0},s_{l,1},\ldots,s_{l,N-1})^{T}\in \mathbb{C}^{N}$ contains $K$ nonzero entries.

\begin{defin}
The minimum Euclidean distance of a set $\mathcal{S}=\{\bm{s}_0,\bm{s}_1,\ldots,\bm{s}_{M-1}\}$ is defined as 
\begin{equation}
    d_{\min}(\mathcal{S}) \triangleq \min_{\substack{\bm{s}_l,\bm{s}_p\in\mathcal{S},\\ l\neq p}}
\|\bm{s}_l-\bm{s}_p\|_2.
\end{equation}
\end{defin}


\begin{thm} \label{dmin_theorem}
Consider a finite sparse vector set 
\begin{equation}
\mathcal{S}=\{\bm{s}_0,\bm{s}_1,\ldots,\bm{s}_{M-1}\} \subset \mathbb{C}^{N} 
\end{equation}
with minimum Euclidean distance $d_{\min}(\mathcal{S})$. Let
\begin{equation}
\mathcal{X}=\{\bm{x}_l=\bm C\bm{s}_l \mid l=0,1,\ldots,M-1\},
\end{equation}
where $\bm C \in \mathbb{C}^{L\times N}$ satisfies $\bm{C}^{H}\bm{C}=\alpha \bm{I}_{N}$. Then, the minimum Euclidean distance of $\mathcal{X}$ satisfies
\begin{equation}
d_{\min}(\mathcal{X})
=
\sqrt{\alpha}\,d_{\min}(\mathcal{S}).
\end{equation}
\end{thm}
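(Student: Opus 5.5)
The argument is a direct computation: every pairwise codeword distance equals $\sqrt{\alpha}$ times the corresponding sparse-vector distance, so the minimum scales by the same factor. The only thing to prove is this pairwise isometry (up to the scalar $\sqrt{\alpha}$); the statement about $d_{\min}$ then follows at once.

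First, fix two distinct indices $l\neq p$ and set $\bm{e}=\bm{s}_l-\bm{s}_p\in\mathbb{C}^N$. By linearity, $\bm{x}_l-\bm{x}_p=\bm C\bm{e}$. Hence
\begin{equation*}
\|\bm{x}_l-\bm{x}_p\|_2^2=(\bm C\bm e)^{H}(\bm C\bm e)=\bm e^{H}\bm C^{H}\bm C\bm e=\alpha\,\bm e^{H}\bm e=\alpha\|\bm{s}_l-\bm{s}_p\|_2^2 ,
\end{equation*}
using the hypothesis $\bm C^{H}\bm C=\alpha\bm I_N$. Taking square roots gives $\|\bm{x}_l-\bm{x}_p\|_2=\sqrt{\alpha}\,\|\bm{s}_l-\bm{s}_p\|_2$. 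The right-hand side is real and nonnegative because $\alpha=\|\bm c_n\|_2^2\ge 0$.

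Second, I check that the minimization is over the same index pairs. Suppose $\bm s_l\neq\bm s_p$ and $\alpha>0$. The identity above then gives $\bm x_l\neq\bm x_p$, so the map $\bm s_l\mapsto\bm x_l$ is injective and $|\mathcal X|=M$. The degenerate case $\alpha=0$ forces $\bm C=\bm 0$; I will note that it is trivial, or exclude it as implicit, since a codebook matrix is nonzero. Therefore
\begin{equation*}
d_{\min}(\mathcal X)=\min_{l\neq p}\sqrt{\alpha}\,\|\bm s_l-\bm s_p\|_2=\sqrt{\alpha}\min_{l\neq p}\|\bm s_l-\bm s_p\|_2=\sqrt{\alpha}\,d_{\min}(\mathcal S).
\end{equation*}
Pulling the positive constant $\sqrt{\alpha}$ out of the minimum is legitimate because scaling by a positive constant is monotone.

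There is no real obstacle. The only point needing care is this bookkeeping: the minima over $\mathcal X$ and over $\mathcal S$ must range over corresponding pairs, which injectivity guarantees. I would also remark that only $\bm C^{H}\bm C=\alpha\bm I_N$ is used, which requires $L\ge N$. If one instead had only an inequality, such as a RIP-type bound restricted to $2K$-sparse differences, the same computation would yield bounds rather than equality.
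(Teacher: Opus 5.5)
Your proposal is correct and follows the paper's proof: expand $\|\bm C(\bm s_l-\bm s_p)\|_2^2$, use $\bm C^{H}\bm C=\alpha\bm I_N$ to get the pairwise identity $\|\bm x_l-\bm x_p\|_2=\sqrt{\alpha}\,\|\bm s_l-\bm s_p\|_2$, then take the minimum over $l\neq p$. Your injectivity check and the remark on $\alpha=0$ are extra bookkeeping that the paper omits. The paper can skip them because its definition of $d_{\min}$ already minimizes over index pairs $l\neq p$.
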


\begin{IEEEproof}
For any $l \neq p$, the pairwise Euclidean distance between $\bm{x}_l$ and $\bm{x}_p$ is
\begin{equation}
    \begin{aligned}
    \|\bm{x}_l - \bm{x}_p\|_2
    &= \|\bm{C}\bm{s}_l - \bm{C}\bm{s}_p\|_2 = \|\bm{C}\left(\bm{s}_l -\bm{s}_p\right)\|_2\\
    &= \sqrt{\big(\bm{C}\left(\bm{s}_l -\bm{s}_p\right)\big)^{H}\bm{C}\big(\bm{s}_l - \bm{s}_p\big)} \\
    &= \sqrt{(\bm{s}_l - \bm{s}_p)^{ H}\bm{C}^{ H}\bm{C}(\bm{s}_l - \bm{s}_p)}.
    \end{aligned}
\end{equation}
Since $\bm{C}$ satisfies $\bm{C}^{\mathrm H}\bm{C} = \alpha\bm{I}_N$, it simplifies to
\begin{equation}\label{Codewords_distance}
    \|\bm{x}_l - \bm{x}_p\|_2
    = \sqrt{\alpha}\|\bm{s}_l - \bm{s}_p\|_2.
\end{equation}
Taking the minimum over all $l\neq p$ yields
\begin{equation}
d_{\min}(\mathcal{X})=\sqrt{\alpha}\,d_{\min}(\mathcal{S}).
\end{equation}
\end{IEEEproof}

\begin{crly} \label{phase_roation_corollary}
Under the assumptions of {\it Theorem~\ref{dmin_theorem}}, let $\bm{P}$ be a diagonal phase rotation matrix given by
\begin{equation}\label{P_form}
\bm{P} =
\begin{bmatrix}
e^{j\phi_0} & 0 & \cdots & 0 \\
0 & e^{j\phi_1} & \cdots & 0 \\
\vdots & \vdots & \ddots & \vdots \\
0 & 0 & \cdots & e^{j\phi_{N-1}}
\end{bmatrix},
\end{equation}
where $|e^{j\phi_n}|=1$ for all $n=0,1,\ldots,N-1$. Define
\begin{equation}
\mathcal{X}=\{\bm{x}_l= \bm C \bm P \bm{s}_l \mid l=0,1,\ldots,M-1\}.
\end{equation}
Then, the minimum Euclidean distance of $\mathcal{X}$ satisfies
\begin{equation}
d_{\min}(\mathcal{X})
=
\sqrt{\alpha}\,d_{\min}(\mathcal{S}).
\end{equation}
\end{crly}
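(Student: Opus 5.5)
The plan is to reduce the corollary directly to Theorem~\ref{dmin_theorem} by absorbing the phase rotation into the codebook matrix. I will define $\tilde{\bm C} \triangleq \bm C\bm P \in \mathbb{C}^{L\times N}$, so that $\mathcal{X}=\{\bm x_l=\tilde{\bm C}\bm s_l\}$ has exactly the form treated in Theorem~\ref{dmin_theorem}, with the same sparse vector set $\mathcal{S}$.

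The key step is to verify that $\tilde{\bm C}$ still satisfies the hypothesis. First I will note that $\bm P$ is unitary: it is diagonal with unit-modulus entries, so $\bm P^{H}\bm P=\text{diag}(|e^{j\phi_0}|^2,\ldots,|e^{j\phi_{N-1}}|^2)=\bm I_N$. Then
\begin{equation*}
\tilde{\bm C}^{H}\tilde{\bm C}=\bm P^{H}\bm C^{H}\bm C\bm P=\alpha\,\bm P^{H}\bm P=\alpha\bm I_N .
\end{equation*}
With this in hand, Theorem~\ref{dmin_theorem} applies verbatim and gives $d_{\min}(\mathcal{X})=\sqrt{\alpha}\,d_{\min}(\mathcal{S})$.

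As an equivalent route, I could view $\mathcal{X}$ as the image under $\bm C$ of the rotated set $\mathcal{S}'=\{\bm P\bm s_l\}$. Since $\bm P$ is unitary, $\|\bm P\bm s_l-\bm P\bm s_p\|_2=\|\bm s_l-\bm s_p\|_2$, so $d_{\min}(\mathcal{S}')=d_{\min}(\mathcal{S})$. Applying the theorem to $\mathcal{S}'$ then gives the same conclusion.

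There is no real obstacle here. The only point needing care is the unitarity of $\bm P$, which follows immediately from $|e^{j\phi_n}|=1$.
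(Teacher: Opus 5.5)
Your proof is correct and is essentially the paper's own argument. The paper likewise sets $\bm B=\bm C\bm P$, uses the unitarity of $\bm P$ to obtain $\bm B^{H}\bm B=\alpha\bm I_N$, and then invokes Theorem~\ref{dmin_theorem}; your alternative route through the rotated set $\{\bm P\bm s_l\}$ is also valid.
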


\begin{IEEEproof} Let $\bm{B}=\bm{C}\bm{P}$. Since $\bm{P}$ is unitary,
$\bm{B}^{H}\bm{B}=\alpha\bm{I}_{N}$. Therefore, the result follows from {\it Theorem~\ref{dmin_theorem}}.
\end{IEEEproof}

\begin{figure*}[!t]
    \centering
    \begin{center}
        \includegraphics[width=0.88\textwidth]{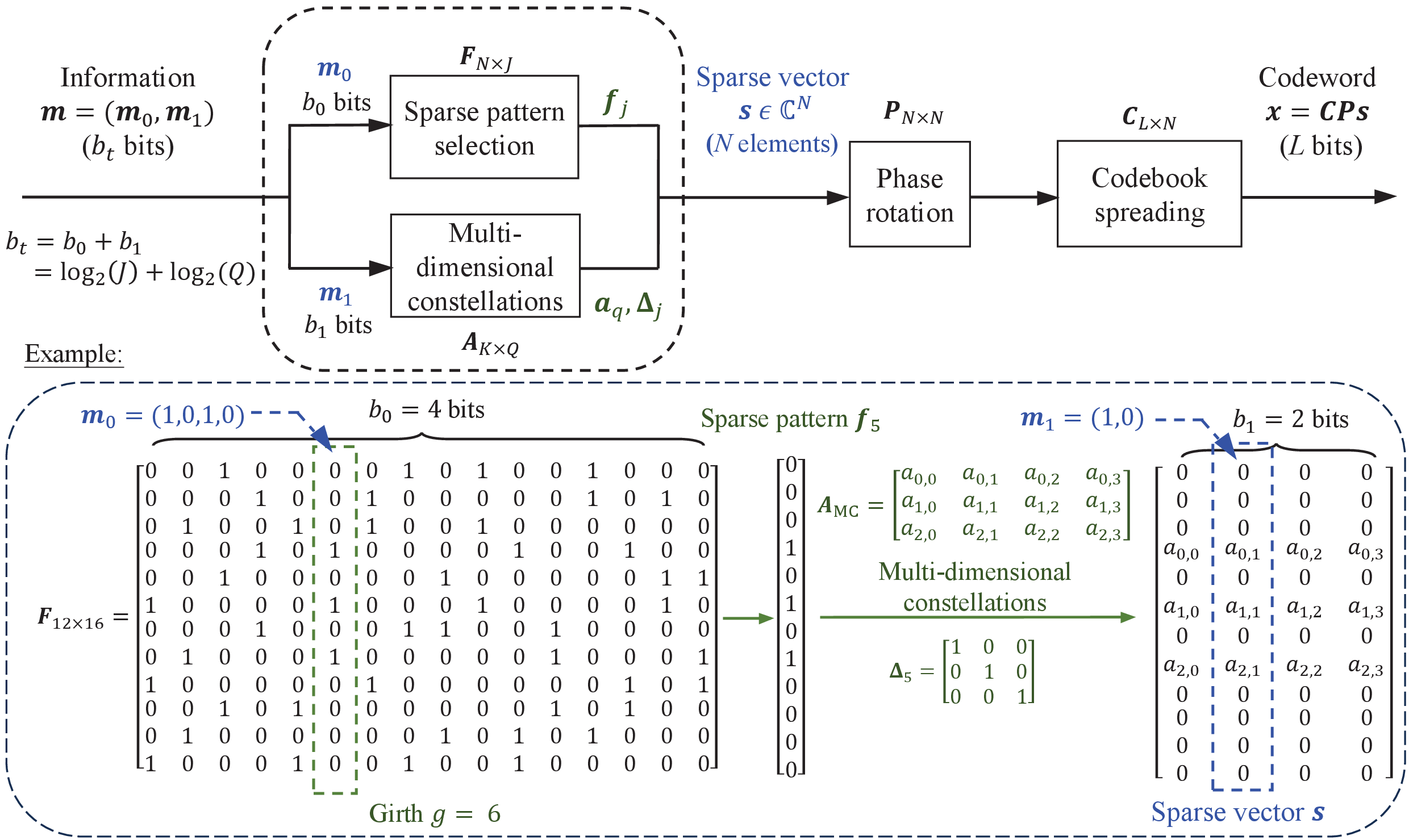}
        \caption{The encoding process of SCMA-SVC.}\label{SCMA_SVC_Scheme}
    \end{center}
\end{figure*}

\begin{remark} \label{main_design_1}
According to {\it Theorem~\ref{dmin_theorem}}, when the sparse vector set $\mathcal{S}$ has an MED 
$d_{\min}(\mathcal{S})=d_S$ and an orthogonal codebook matrix $\bm C$ is employed, the resulting transmitted codebook~$\mathcal{X}$ achieves an MED
\begin{equation}\label{Remark_1_eq}
d_{\min}(\mathcal{X})
=
\sqrt{\alpha} \cdot d_S.
\end{equation}
Thus, the MED of the transmitted codebook $\mathcal{X}$ is directly determined by that of the sparse vector set $\mathcal{S}$. Furthermore, \eqref{Codewords_distance} implies that all pairwise Euclidean distances in the transmitted codebook are scaled by a factor of $\sqrt{\alpha}$ from those in the sparse vector set. Therefore, from a coding perspective, maximizing the pairwise codeword distances is equivalent to increasing the pairwise distances among the sparse vectors in~$\mathcal{S}$. Consequently, the Euclidean distance spectrum of the transmitted codebook can be characterized through sparse vector design.
\end{remark}

\begin{remark} \label{main_design_2}
According to {\it Corollary~\ref{phase_roation_corollary}}, inserting a diagonal phase rotation matrix $\bm P$ before the codebook matrix $\bm C$ does not affect the MED scaling. Hence, the essential distance characteristics of the transmitted codebook are dictated by the sparse vector set $\mathcal{S}$. Moreover, the phase rotation reshapes the structure of the transmitted codeword differences. As will be shown in Section~\ref{subsec:pep_kpi}, a properly designed phase rotation can enable the resulting codewords to achieve full diversity order under Rayleigh fading channels.
\end{remark}

Notably, in SCMA systems, each user-specific codeword is inherently sparse as determined by the corresponding sparse indicator matrix. Moreover, the multi-dimensional MC is typically designed to achieve a large MED. These properties are consistent with the objective of maximizing MED. Therefore, we propose the SCMA-SVC scheme by leveraging both the sparse pattern design principles and the underlying of multi-dimensional constellations to construct a sparse vector set $\mathcal{S}$ whose MED can be explicitly characterized.

\subsection{SCMA-SVC Encoding}\label{sub_sec_SCMA_SVC}
Let us introduce the SCMA-SVC encoder in Fig.~\ref{SCMA_SVC_Scheme}. The encoder maps the information $\bm{m}$ of $b_t$ bits into a complex sparse vector $\bm{s}=(s_0,s_1,\ldots,s_{N-1})^{T}$ with $K$ nonzero entries. The information vector $\bm{m}$ is split into two vectors of lengths $b_0$ and $b_1$, denoted by $\bm{m}_0=(m_{0,0},m_{0,1},\ldots,m_{0,b_0-1})$ and $\bm{m}_1=(m_{1,0},m_{1,1},\ldots,m_{1,b_1-1})$, respectively. Then, $\bm{m}_0$ is used to select a sparse pattern $\bm f_j$ from the SCMA indicator matrix $\bm{F}=\begin{bmatrix}\bm f_0,\bm f_1,\ldots,\bm f_{J-1}\end{bmatrix}$, where $j$ is the decimal representation of the binary vector $\bm{m}_0$, and each column of~$\bm{F}$ specifies a sparse pattern with $K$ nonzero elements, i.e., $\|\bm{f}_j\|_{0}=K$. On the other hand, $\bm{m}_1$ is used to select a multi-dimensional constellation symbol $\bm{a}_q$ from the MC matrix $\bm{A}_{\text{MC}}=\begin{bmatrix}\bm{a}_0,\bm{a}_1,\ldots,\bm{a}_{Q-1}\end{bmatrix}$, where $q$ is the decimal representation of the binary vector $\bm{m}_1$. For $j=0,1,\ldots,J-1$ and $q=0,1,\ldots,Q-1$, the resultant sparse vector is obtained as 
\begin{equation}\label{eq:mapped_sparse_vector}
\bm{s}_l=\bm{s}_{jQ+q}=\bm {V}_{j} \bm{\Delta}_{j} \bm{a}_q,
\end{equation}
where $\bm{\Delta}_{j}$ represents the constellation operator corresponding to the $j$-th sparse pattern. Consequently, the sparse vector set $\mathcal{S}$ is given by
\begin{equation}
\mathcal{S}=\{\bm{s}_{l} \mid l=0,1,\ldots,M-1\},
\end{equation}
where $M=JQ$ and $\|\bm{s}_l\|_{2}^{2}=1$ for all $l$. Next, each sparse vector $\bm s$ is processed by a diagonal phase rotation matrix $\bm{P}$, and then spread by certain column vectors of the codebook matrix $\bm{C}$ to form the transmitted codeword
\begin{equation}
    \bm{x} =\bm{C} \bm{P} \bm{s},
\end{equation}
where $\bm{P}$ is defined in \eqref{P_form} with $|e^{j\phi_n}| = 1$ for all $n$. 

Since the sparse-pattern index $j$ is selected from $J$ candidate sparse patterns and the constellation index $q$ is selected from $Q$ constellation points, the proposed SCMA-SVC supports $M=JQ$ distinct sparse vectors. Therefore, the total number of information bits carried by each codeword is 
\begin{equation}
b_t=\log_2(M)=\log_2(J)+\log_2(Q).
\end{equation}
For a codeword length $L$, the corresponding code rate of the proposed SCMA-SVC is given by
\begin{equation}
R=\frac{b_t}{L}=\frac{\log_2(J)+\log_2(Q)}{L}.
\end{equation}

According to {\it Theorem~\ref{dmin_theorem}} and {\it Corollary~\ref{phase_roation_corollary}}, when an orthogonal codebook matrix $\bm{C}$ is employed, the MED of the transmitted codebook $\mathcal{X}=\{\bm{x}_{l}=\bm{C}\bm{P}\bm{s}_l \mid l=0,1,\ldots,M-1\}$ is determined by that of the sparse vector set $\mathcal{S}$. To increase the pairwise Euclidean distances among the sparse vectors in $\mathcal{S}$, we propose the following design principles based on sparse pattern separation and symbol separation.

{\it Design criterion:}

1) Sparse pattern separation: The sparse indicator matrix $\bm{F}$ is designed with girth $g\ge 6$\text{\,\begingroup\renewcommand{\thefootnote}{\fnsymbol{footnote}}\footnotemark[2]\endgroup}, to ensure that any two columns $\bm{f}_l$ and $\bm{f}_p$ overlap in at most one position, thereby increasing the separation between the corresponding sparse patterns.

2) Symbol separation: The multi-dimensional mother constellation $\bm{A}_{\text{MC}}$, together with the constellation operator $\bm{\Delta}_{j}$, is designed to maximize the MED among the symbols within each sparse pattern.

\begingroup
\renewcommand{\thefootnote}{\fnsymbol{footnote}}
\footnotetext[2]{The girth of a Tanner graph, i.e., the bipartite graph representation of a sparse matrix, is defined as the length of the shortest cycle in the graph. A girth of at least $6$ implies that the graph contains no cycles of length $4$.}
\endgroup


By combining a large-girth sparse indicator matrix with a proper multi-dimensional MC, a sparse vector set $\mathcal{S}$ with a large MED can be constructed. 

Next, we analyze the MED of the sparse vector set generated by the proposed SCMA-SVC scheme. It is emphasized that each sparse vector $\bm{s} \in \mathcal{S}$ contains $K$ nonzero elements, whose positions are determined by the indicator matrix $\bm{F}$. The values of these nonzero elements are drawn from the $\bm{A}_{\text{MC}}$. Due to the use of a large-girth indicator matrix, any two columns of~$\bm{F}$ have at most one common nonzero position. As a result, the MED of the sparse vector set is determined by the following two cases.

{\it Case 1:} Suppose that $\bm{s}_l$ and $\bm{s}_p$ are two sparse vectors whose nonzero positions are distinct. Let $\mathcal{I}_l$ and $\mathcal{I}_p$ denote the index sets of their nonzero positions. Owing to the large-girth property of $\bm{F}$, the overlap between the two sparse patterns satisfies
\begin{equation}
|\mathcal{I}_l \cap \mathcal{I}_p| \le 1.
\end{equation}
The squared Euclidean distance between $\bm{s}_l$ and $\bm{s}_p$ can therefore be expressed as
\begin{equation}
\|\bm{s}_l-\bm{s}_p\|_{2}^{2}
=
\sum_{i\in \mathcal{I}_l\setminus \mathcal{I}_p}|a_i|^2
+
\sum_{i\in \mathcal{I}_p\setminus \mathcal{I}_l}|b_i|^2
+
\sum_{i\in \mathcal{I}_l\cap \mathcal{I}_p}|a_i-b_i|^2,
\end{equation}
where $a_i$ and $b_i$ denote the constellation symbols drawn from~$\bm{A}_{\text{MC}}$, and each constellation symbol has energy $1/K$, i.e., $|a_i|^2 = |b_i|^2 = 1/K$. We consider a worst-case scenario where two sparse vectors share exactly one overlapping position and the constellation symbols at the overlapping position are identical. In this case, the contribution of the overlapping position becomes zero. Since each constellation symbol has energy $1/K$, the remaining $K-1$ non-overlapping positions contribute a total squared distance of $2(K-1)/K$. Therefore, the Euclidean distance becomes
\begin{equation}\label{worst_case_eq}
d_{1}=\|\bm{s}_l-\bm{s}_p\|_{2}= \sqrt{\frac{2(K-1)}{K}}.
\end{equation}
This indicates that the MED in {\it Case~1} is determined by the $K-1$ non-overlapping positions of the sparse vectors together with the symbol energy $1/K$.

{\it Case 2:} We next consider two sparse vectors $\bm{s}_l$ and $\bm{s}_p$ that share the same sparse pattern, i.e., $\mathcal{I}_l=\mathcal{I}_p$. The Euclidean distance depends on the difference between the constellation symbols, i.e.,  $\|\bm{s}_l-\bm{s}_p\|_2
=\|\bm{a}_l-\bm{a}_p\|_2$ where $\bm{a}_l$ and $\bm{a}_p$ denote two columns of the multi-dimensional constellation $\bm{A}_{\text{MC}}$. Therefore, the MED in this case is determined by the MED of $\bm{A}_{\text{MC}}$. That is,
\begin{equation}
d_{2}=d_{\min}(\mathcal{A}_{\text{MC}}).
\end{equation}

Combining the above two cases, the MED of the sparse vector set $\mathcal{S}$ generated by the proposed SCMA-SVC can be expressed as
\begin{equation}\label{MED_of_sparse_vector}
d_{\min}(\mathcal{S})
=\min\left\{d_{1}, d_{2}\right\}=\min\left\{\hspace{-0.2em}\sqrt{\frac{2(K-1)}{K}}, d_{\min}(\mathcal{A}_{\text{MC}})\hspace{-0.2em}\right\}.
\end{equation}
This result reveals that the MED of the sparse vector set is jointly determined by two factors: the sparse pattern structure induced by the indicator matrix~$\bm{F}$ and the MED of the multi-dimensional constellation $\bm{A}_{\text{MC}}$. Therefore, improving the overall MED requires both a large-girth indicator matrix and a mother multi-dimensional constellation  with a large MED. The design of $\bm{A}_{\text{MC}}$ may follow the distance-optimized SCMA codebook design commonly adopted in the literature \cite{Beko_12, Wen_22}.

\begin{remark}\label{Remark_SCMA_dim_form}
By substituting \eqref{MED_of_sparse_vector} into \eqref{Remark_1_eq}, the MED of the resulting transmitted codebook $\mathcal{X}$ can be obtained as
\begin{equation}\label{eq_SCMA_dim_form}
d_{\min}(\mathcal{X})=\sqrt{\alpha}\min\left\{\sqrt{\frac{2(K-1)}{K}}, d_{\min}(\mathcal{A}_{\text{MC}})\right\}.
\end{equation}
\eqref{eq_SCMA_dim_form} provides an explicit characterization of the MED of the transmitted codebook.
\end{remark}

\begin{eg}\label{eg_SCMA-SVC_6_bits}
As shown in Fig.~\ref{SCMA_SVC_Scheme}, we consider an SCMA-SVC encoder that encodes $b_t=6$ bits. The indicator matrix $\bm{F}_{12 \times 16}$ shown in Fig.~\ref{SCMA_SVC_Scheme} is adopted, which has girth $6$. The $3 \times 4$ multi-dimensional constellation $\bm{A}_{\text{MC}}$ shown in (\ref{A_MC_3_4}), obtained from \cite{Wen_22} is employed, where $d_{\min}(\mathcal{A_{\text{MC}}})=1.633$. A $16 \times 12$ codebook matrix $\bm{C}$ is chosen as a bipolar orthogonal matrix satisfying $\bm C^{H}\bm C=16 \bm I_{12}$. For illustration purpose, the phase rotation matrix $\bm{P}=\bm{I}_{12}$ (i.e., $\phi_n=0$ for all $n$) is used. We have $b_0=4$, $b_1=2$, $N=12$, $L=16$, $K=3$, $Q=4$, and $\alpha=16$. The first $b_0=4$ bits determine the sparse pattern index $j$, and the remaining $b_1=2$ bits determine the constellation index $q$. Assume that the transmitted information bits are $\bm{m}=(1,0,1,0,1,0)$. Then, we have $\bm{m}_0=(1,0,1,0)$ and $\bm{m}_1=(1,0)$, resulting in $j=5$ and $q=1$. Hence, the sparse pattern $\bm{f}_5$ is selected from the indicator matrix $\bm{F}$ with the corresponding binary mapping matrix $\bm{V}_5$. The selected constellation vector is $\bm{a}_1$ from $\bm{A}_{\text{MC}}$ with the corresponding constellation operator $\bm{\Delta}_5=\bm{I}_{3}$. According to \eqref{eq:mapped_sparse_vector}, the mapped sparse vector is given by 
\begin{equation}
\begin{aligned}
    \bm s_{21}&=\bm{V}_5\bm{\Delta}_5\bm{a}_1=(0,0,0,0.2092 - 0.5381i,0,\\&0.5597 + 0.1417i,0,-0.4474+ 0.3649i,0,0,0,0)^{T},
\end{aligned}
\end{equation}
which contains $3$ nonzero entries. According to \eqref{MED_of_sparse_vector}, the MED of the resulting sparse vector set $\mathcal{S}$ can be obtained as $d_{\min}(\mathcal{S})=\min\{1.155,1.633\}=1.155$. Then, the corresponding codebook $\mathcal{X}$ can be generated with an MED $d_{\min}(\mathcal{X})=\sqrt{16} \times 1.155 = 4.62$.
\end{eg}

\subsection{KPIs of the Proposed SCMA-SVC Codebooks Under Gaussian and Rayleigh Channels}\label{subsec:pep_kpi}

In this subsection, we present the key performance indicators (KPIs) of the proposed SCMA-SVC codebooks including MED, diversity order, and MPD under both Gaussian and Rayleigh fading channels. In addition, we design the phase rotation matrix $\bm{P}$ to enable the proposed SCMA-SVC to achieve full diversity order.

At the receiver, the received signal $\bm{y}$ after the fading channel is given by
\begin{equation}
    \bm y= \bm{H} \bm x + \bm w = \bm{H}\bm C \bm P \bm s + \bm w,
\end{equation}
where $\bm{H}$ is a diagonal matrix formed from the channel vector $\bm{h}=(h_0,h_1,\ldots, h_{L-1})$ and $\bm w=(w_0,w_1,\ldots, w_{L-1})$ is the complex additive white Gaussian noise, i.e., $w_i \sim \mathcal{CN}(0, \sigma^2)$. 

\subsubsection{PEP and BLER over Gaussian channels}

The pairwise error probability of any two distinct codewords $\bm{x}_0$ and $\bm{x}_1$ in $\mathcal{X}$, denoted by $P(\bm{x}_0\rightarrow \bm{x}_1)$, is defined as the probability that the decoder selects $\bm{x}_1$ when $\bm{x}_0$ is transmitted. Under ML decoding over the Gaussian channel, the PEP associated with the proposed SCMA-SVC scheme can be expressed as
\begin{equation}
\begin{aligned}
    &P(\bm{x}_0\rightarrow \bm{x}_1)=Q\!\left(\sqrt{\frac{\|\bm{x}_0-\bm{x}_1\|_{2}^{2}}{2N_0}}\right)\\&=Q\!\left(\sqrt{\frac{\|\bm{C}\bm{P}(\bm{s}_0-\bm{s}_1)\|_{2}^{2}}{2N_0}}\right)=Q\!\left(\sqrt{\frac{\alpha\|\bm{s}_0-\bm{s}_1\|_{2}^{2}}{2N_0}}\right),
\end{aligned}
\label{eq:pep_awgn}
\end{equation}
since $\bm{C}^{\mathrm H}\bm{C} = \alpha\bm{I}_{N}$ and $\bm{P}^H \bm{P} = \bm{I}_{N}$. Based on the pairwise error expression in (\ref{eq:pep_awgn}), assuming that all codewords are transmitted with equal probability, the BLER can be upper bounded by applying the union bound over all distinct codeword pairs in $\mathcal{X}$, leading to
\begin{equation}
\text{BLER}_{\text{SCMA-SVC}}
\le
\frac{1}{M}
\sum_{l=0}^{M-1}
\sum_{\substack{p=0 \\ p \ne l}}^{M-1}
P(\bm{x}_l \rightarrow \bm{x}_p).
\label{eq:BLER_union}
\end{equation}
\eqref{eq:BLER_union} shows that the BLER performance under the Gaussian channel is characterized by the Euclidean distance spectrum of the transmitted codebook $\mathcal{X}$. Hence, the MED is considered the primary KPI of the proposed SCMA-SVC scheme over Gaussian channels. 


\subsubsection{PEP and BLER over Rayleigh fading channels}\label{sec:full_DO} For any two distinct codewords $\bm{x}_0$ and $\bm{x}_1$, we define $\Delta_i = x_{0,i} - x_{1,i}$ as the $i$-th component-wise difference between $\bm{x}_0$ and $\bm{x}_1$, where $\bm{x}_0 = \bm{C}\bm{P}\bm{s}_0$ and 
$\bm{x}_1 = \bm{C}\bm{P}\bm{s}_1$.  The PEP conditioned on the channel matrix $\bm{H}$ is given by
\begin{equation}
\begin{aligned}
&P(\bm{x}_0\rightarrow \bm{x}_1\mid \bm{H})\\
&=
Q\!\left(\sqrt{\frac{\|\bm{H}\left(\bm{x}_0-\bm{x}_1\right)\|_2^2}{2N_0}}\right)=
Q\!\left(
\sqrt{\frac{\sum_{i=0}^{L-1}|h_i|^2\,|\Delta_i|^2}{2N_0}}
\right).
\end{aligned}
\label{eq:pep_cond_rayleigh}
\end{equation}
Then, by using the approximation of the Q-function \cite{Kim_08}
\begin{equation}
Q(x)\approx \frac{1}{12}e^{-x^2/2}+\frac{1}{6}e^{-2x^2/3},
\label{eq:Q_approx}
\end{equation}
and taking the expectation over the channel matrix $\bm{H}$, the PEP can be derived as follows \cite{Liu_21}:
\begin{equation}
\begin{aligned}
    &P(\bm{x}_0\rightarrow \bm{x}_1)\\
&\approx
\frac{1}{12}\prod_{i=0}^{L-1}\left(1+\frac{|\Delta_i|^2}{4N_0}\right)^{-1}
+
\frac{1}{6}\prod_{i=0}^{L-1}\left(1+\frac{|\Delta_i|^2}{3N_0}\right)^{-1}.
\end{aligned}
\label{eq:pep_closed_form}
\end{equation}

For any two distinct codewords $\bm{x}_0$ and $\bm{x}_1$, we define the index set of nonzero component-wise distances as
\begin{equation}
\mathcal{D}^{(\bm{x}_0,\bm{x}_1)}\triangleq\{i \mid \Delta_i \neq 0,\ 0\le i\le L-1\}
\label{eq:D_set_1}
\end{equation}
and the cardinality of set 
\begin{equation}
G_d^{(\bm{x}_0,\bm{x}_1)}\triangleq|\mathcal{D}^{(\bm{x}_0,\bm{x}_1)}|.
\label{eq:D_set_2}
\end{equation}
In high-SNR region, it follows that $1+\frac{|\Delta_i|^2}{4N_0}\approx\frac{|\Delta_i|^2}{4N_0}$ and $1+\frac{|\Delta_i|^2}{3N_0}\approx\frac{|\Delta_i|^2}{3N_0}$ as $N_0 \rightarrow 0$. Accordingly, the PEP can be written as
\begin{equation}
\begin{aligned}
&P(\bm{x}_0\to\bm{x}_1)\\
&\hspace{-0.2em}\approx\hspace{-0.3em}
\left(\frac{1}{N_0}\right)^{-G_d^{(\bm{x}_0,\bm{x}_1)}}
\hspace{-0.8em}\left(
\frac{4^{-G_d^{(\bm{x}_0,\bm{x}_1)}}}{12}
\hspace{-0.2em}+\hspace{-0.2em}
\frac{3^{-G_d^{(\bm{x}_0,\bm{x}_1)}}}{6}
\right)
\hspace{-0.5em}\prod_{i\in\mathcal{D}^{(\bm{x}_{0},\bm{x}_{1})}}\hspace{-0.3em}|\Delta_i|^{-2}.
\label{eq:pep_high_snr}
\end{aligned}
\end{equation}
Substituting \eqref{eq:pep_high_snr} into the union bound expression in \eqref{eq:BLER_union}, we obtain
\begin{equation}
\mathrm{BLER}_{\text{SCMA-SVC}}
\hspace{-0.1em}\lesssim\hspace{-0.1em}
\frac{1}{M}
\sum_{l=0}^{M-1}
\sum_{p\ne l}
U_{l,p}
\left(\frac{1}{N_0}\right)^{\hspace{-0.4em}-G_d^{(\bm{x}_l,\bm{x}_p)}}
\hspace{-3.3em}\prod_{i\in\mathcal{D}^{(\bm{x}_l,\bm{x}_p)}} |\Delta_i|^{-2},
\end{equation}
where $U_{l,p}$ is a constant independent of $N_0$. In the high-SNR regime, the BLER is dominated by the smallest achievable diversity order among all distinct codeword pairs. 
Accordingly, the diversity order of the codebook $\mathcal{X}$ is defined as
\begin{equation}
G_d 
\triangleq 
\min_{l \ne p} 
G_d^{(\bm{x}_l,\bm{x}_p)}.
\end{equation}

Among the codeword pairs achieving $G_d$, the performance is further governed by the corresponding MPD, defined as
\begin{equation}
\mathrm{MPD}^{(G_d)}
\triangleq
\min_{\substack{l \ne p \\
G_d^{(\bm{x}_l,\bm{x}_p)} = G_d}}
\prod_{i \in \mathcal{D}^{(\bm{x}_l,\bm{x}_p)}} 
|\Delta_i|.
\label{eq:MPD_def}
\end{equation}

Finally, combining the union bound with the PEP characterization reveals that the BLER over Rayleigh fading is asymptotically governed
by the diversity order and the corresponding MPD. Therefore, under Rayleigh fading channels, the KPIs of the proposed SCMA-SVC scheme are the diversity order $G_d$ and the associated $\mathrm{MPD}^{(G_d)}$. To achieve the best possible BLER performance, it is desirable to attain full diversity, i.e., $G_d = L$, where $L$ denotes the codeword length.

Phase rotation has been shown to be an effective approach for achieving full diversity when the transmitted signal is formed from a vector of modulation symbols \cite{Wang_26}. Motivated by this result, we apply phase rotation to the sparse vector $\bm{s}$ before the codebook spreading stage in the proposed SCMA-SVC scheme, as illustrated in Fig.~\ref{SCMA_SVC_Scheme}. 

In the following theorem, we show that a properly designed diagonal phase rotation $\bm{P}$ enables the proposed SCMA-SVC to achieve the full diversity order $L$ almost surely. 

\begin{thm}\label{thm:full_div_random_phase} Assume that all entries of $\bm{C}$ are nonzero, i.e., $c_{i,n} \neq 0$ for all $i,n$. Let $\bm{P}$ be the diagonal phase rotation matrix in (\ref{P_form}), whose phases $\{\phi_n\}_{n=0}^{N-1}$ are i.i.d. and continuously distributed over $[0,2\pi)$. Then the SCMA-SVC codebook $\mathcal{X}= \{\bm{x}=\bm{C}\bm{P}\bm{s} \mid \bm{s}\in\mathcal{S}\}$
achieves full diversity order $G_d=L$ almost surely.
\end{thm}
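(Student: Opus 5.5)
We need to show that, almost surely, every pair of distinct codewords differs in every coordinate. For a pair $\bm{s}_l\neq\bm{s}_p$ in $\mathcal{S}$, set $\bm{d}=\bm{s}_l-\bm{s}_p\neq\bm{0}$. The $i$-th component of the codeword difference is
\begin{equation}
\Delta_i=\sum_{n=0}^{N-1} c_{i,n}e^{j\phi_n}d_n=\sum_{n\in\mathcal{T}} c_{i,n}d_n e^{j\phi_n},
\end{equation}
where $\mathcal{T}=\{n\mid d_n\neq 0\}$ is nonempty. Full diversity means $G_d^{(\bm{x}_l,\bm{x}_p)}=L$, i.e. $\Delta_i\neq 0$ for every $i$ and every pair. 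There are finitely many pairs ($M(M-1)/2$) and finitely many rows ($L$). By a union bound it therefore suffices to show that, for each fixed pair and each fixed $i$, the event $\{\Delta_i=0\}$ has probability zero. The whole proof thus reduces to a single-coordinate statement.

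Fix $i$ and the pair, and write $\beta_n=c_{i,n}d_n$. By the hypothesis $c_{i,n}\neq 0$, we have $\beta_n\neq 0$ for all $n\in\mathcal{T}$.

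If $|\mathcal{T}|=1$, then $|\Delta_i|=|\beta_n|>0$ deterministically, so there is nothing to prove.

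If $|\mathcal{T}|\ge 2$, pick some $n_0\in\mathcal{T}$ and condition on all other phases $\{\phi_n\}_{n\in\mathcal{T}\setminus\{n_0\}}$. Then $\Delta_i=0$ holds if and only if
\begin{equation}
e^{j\phi_{n_0}}=-\beta_{n_0}^{-1}\sum_{n\in\mathcal{T}\setminus\{n_0\}}\beta_n e^{j\phi_n}.
\end{equation}
The right-hand side is a fixed complex number $z$ given the conditioning. The set of $\phi_{n_0}\in[0,2\pi)$ with $e^{j\phi_{n_0}}=z$ has at most one element, and is empty unless $|z|=1$. Since $\phi_{n_0}$ is independent of the other phases and continuously distributed (atomless), this conditional probability is zero. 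Integrating over the remaining phases by Fubini gives $\Pr(\Delta_i=0)=0$.

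The main subtlety is to make sure independence and continuity are used correctly. The argument needs only that $\phi_{n_0}$ is atomless and independent of the others; it does not need a density. One should also note that $\mathcal{S}$ is a fixed deterministic finite set, so the difference vectors $\bm{d}$ do not depend on $\bm{P}$. With this in place, the finite union bound over pairs and rows yields $\Pr(G_d<L)=0$, i.e. $G_d=L$ almost surely.

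As a side remark, the argument nowhere needs the orthogonality $\bm{C}^H\bm{C}=\alpha\bm{I}_N$ or the girth structure of $\bm{F}$. Only the nonzero entries of $\bm{C}$ and $\bm{d}\neq\bm{0}$ are used.
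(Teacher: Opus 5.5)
Your proof is correct and follows essentially the same route as the paper: isolate one phase in the support of the difference vector, observe that $\Delta_i=0$ forces that phase to at most one value given the others, conclude that the conditional and then unconditional probability is zero, and finish with a finite union bound over rows and difference vectors. Your version is slightly tighter than the paper's, since it uses only that the phases are atomless rather than specializing to uniform phases, and it notes explicitly that $e^{j\phi_{n_0}}=z$ has no solution unless $|z|=1$ and that the singleton-support case is immediate.
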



\begin{IEEEproof} Since $\bm{x}_l=\bm{C}\bm{P}\bm{s}_l$ and $\bm{x}_p=\bm{C}\bm{P}\bm{s}_p$, it follows that $\bm{x}_l-\bm{x}_p=\bm{C}\bm{P}(\bm{s}_l-\bm{s}_p)$. Define the difference set of the sparse vector set $\mathcal S$ as $\mathcal{E}=\{\bm{s}_l-\bm{s}_p \mid \bm{s}_l,\bm{s}_p\in\mathcal{S},\ l\neq p\}$. Consider two distinct codewords $\bm{x}_l$ and $\bm{x}_p$, and let 
\begin{equation}
\bm{e} = \bm{s}_l - \bm{s}_p \in \mathcal{E}.
\end{equation}
Define the $n$-th component-wise difference between the sparse vectors $\bm{s}_l$ and $\bm{s}_p$ as $\lambda_n=s_{l,n}-s_{p,n}$. Recall that the $i$-th component-wise difference between two codewords $\bm{x}_l$ and $\bm{x}_p$ is defined as $\Delta_i = x_{l,i}-x_{p,i}$. Under flat Rayleigh fading with independent fading across dimensions, the diversity order of the pairwise error probability is determined by the number of nonzero component-wise distances $\Delta_i$. Therefore, full diversity order $G_{d}=L$ is achieved if
\begin{equation}
\Delta_i \neq 0,
\quad
\forall i = 0,1,\dots,L-1,
\quad
\forall \bm{e} \in \mathcal{E}.
\label{eq:full_div_condition}
\end{equation}
Fix an arbitrary index $i \in\{0,1,\ldots,L-1\}$ and a nonzero difference vector
$\bm{e}\in \mathcal{E}$. Let its support set be $\mathcal{I}=\{n \mid \lambda_n \neq 0\}$ where $|\mathcal{I}|=t\ge 1$. From $\bm{x}_l-\bm{x}_p = \bm{C}\bm{P}(\bm{s}_l-\bm{s}_p)$ and $\lambda_n = s_{l,n}-s_{p,n}$, the $i$-th component-wise distance can be written as
\begin{equation}
\begin{aligned}
    \Delta_i=\sum_{n=0}^{N-1} c_{i,n}e^{j\phi_n}\lambda_n
=\sum_{n\in\mathcal{I}} c_{i,n}e^{j\phi_n}\lambda_n.
\end{aligned}
\end{equation}
Let $q\in\mathcal I$ be arbitrary. Then we can rewrite $\Delta_i$ as
\begin{equation}
\Delta_i=c_{i,q}\lambda_q e^{j\phi_q}+\sum_{n\in\mathcal{I}\setminus\{q\}}c_{i,n}\lambda_n e^{j\phi_n}.
\label{eq:isolate}
\end{equation}
If $\Delta_i = 0$, we obtain
\begin{equation}
e^{j\phi_q}=-\frac{\sum_{n\in\mathcal{I}\setminus\{q\}}c_{i,n}\lambda_n e^{j\phi_n}}{c_{i,q}\lambda_q}.
\label{eq:phi_constraint}
\end{equation}
Since $\lambda_q \neq 0$ and $c_{i,q} \neq 0$ by assumption, the denominator is nonzero. Hence the right-hand side of \eqref{eq:phi_constraint} is a well-defined complex constant, denoted by $u$. Therefore,
\begin{equation}
\phi_q = \arg(u) \quad (\mathrm{mod}\; 2\pi),
\label{eq:unique_phi}
\end{equation}
which corresponds to at most one value in $[0,2\pi)$. Thus, for each fixed pair $(i,\bm{e})$ and fixed configuration
of $\{\phi_n\}_{n\neq q}$, the event $\Delta_i = 0$ requires at most one specific value $\theta \in [0,2\pi)$ of $\phi_q$. Assume that
\begin{equation}
\phi_q \sim \mathcal{U}[0,2\pi),
\qquad
f_{\phi_q}(\varphi)=\frac{1}{2\pi}, \quad \varphi\in[0,2\pi),
\label{eq:uniform_phase}
\end{equation}
and that $\{\phi_n\}$ are mutually independent. Since $\phi_q$ is drawn from a continuous distribution over $[0,2\pi)$,
the probability that it takes any specific value is zero, i.e.,
\begin{equation}
\Pr(\phi_q = \theta) = 0.
\end{equation}
Consequently, $\Pr\!\left(\Delta_i = 0\;\middle|\;\{\phi_n\}_{n\neq q}\right)= 0$. Taking expectation over the remaining phase variables yields
\begin{equation}
\Pr\!\left(\Delta_i = 0\right)=\mathbb{E}_{\{\phi_n\}_{n\neq q}}\!\left[\Pr\!\left(\Delta_i = 0\;\middle|\;\{\phi_n\}_{n\neq q}\right)\right]=0.
\label{eq:unconditional_zero_refined}
\end{equation}
Since both $i$ and $\bm{e}$ range over finite sets, the number of events $\{\Delta_i = 0\}$ is finite. By the union bound,
\begin{equation}
\Pr\!\left(\bigcup_{i,\bm{e}}\{ \Delta_i = 0 \}\right)\le\sum_{i,\bm{e}}
\Pr\!\left(\Delta_i = 0\right)=0.
\label{eq:union_zero_refined}
\end{equation}
Therefore, condition \eqref{eq:full_div_condition} holds with probability $1$. Hence, the proposed SCMA-SVC achieves full diversity order $L$ almost surely.
\end{IEEEproof}

\section{Proposed Low-Complexity Decoding for SCMA-SVC}\label{Sec:SCMA_SVC_decoder}
In this section, we develop a low-complexity near-ML decoder for SCMA-SVC that leverages the structured sparsity of the SCMA-SVC codebook to reduce the complexity of ML detection. A generalized likelihood ratio test (GLRT)-based metric is employed to efficiently identify the most likely sparse pattern. Subsequently, the search for the estimated codeword $\hat{\bm{x}}$ is restricted to a highly correlated subset of candidate codewords, rather than exhaustively searching over all possible codewords as in conventional ML detection.

Generally, the ML decoder finds the codeword $\hat{\bm x}$ in the transmitter codebook $\mathcal{X}=\{\bm{x}_l \mid l=0,1,\ldots,{2^{b_t}}-1\}$ whose Euclidean distance to the received signal $\bm y$ is minimized, i.e.,
\begin{equation}\label{ML_Decoder}
\begin{aligned} 
    \hat{\bm{x}} &= \arg \min\limits_{l=0,1,\ldots,2^{b_t}-1}\| \bm{x}_{l}-\bm{y}\|_{2}^{2}.
\end{aligned}
\end{equation}
The search complexity of ML detection grows exponentially with the number of information bits $b_t$, requiring exhaustive evaluation
over all $2^{b_t}$ codewords. Even for moderate values of~$b_t$, such exhaustive search becomes undesirable under stringent latency constraints. Therefore, our objective is to develop an efficient search strategy that significantly reduces the decoding complexity while maintaining near-ML performance. Specifically, instead of exhaustively searching over all $2^{b_t}$ codewords, the detection is performed only within a reduced subset of highly reliable codeword candidates.

The received signal is modeled as
\begin{equation}
\bm{y} = \bm{H}\bm{x} + \bm{n}= \bm{H}\bm{C}\bm{P}\bm{s} + \bm{n}\triangleq \bm{\Phi}\bm{s} + \bm{n},
\end{equation}
where $\bm{\Phi}\triangleq\bm{H}\bm{C}\bm{P}$ denotes the effective channel-aware spreading dictionary. Due to the structured sparsity of $\bm{s}$, only $K$ columns of $\bm{\Phi}$ are active for each codeword, where $\bm{\Phi}=\begin{bmatrix}\bm{\phi}_{0},\bm{\phi}_{1},\ldots,\bm{\phi}_{N-1} \end{bmatrix}$. Let $\mathcal{J}=\{u_{0},u_{1},\ldots,u_{K-1}\}$ denote a candidate set of $K$ active indices of the sparse vector $\bm{s}$. We denote by $\bm{\Phi}_{\mathcal{J}}$ the submatrix formed by selecting the columns of $\bm{\Phi}$ indexed by $\mathcal{J}$. For example, if $\bm{s}= (0,0,1,1,0,0,0,0,1,0,0,0)^T$, then $\mathcal{J}=\{2,3,8\}$ and $\bm{\Phi}_{\mathcal{J}}=[\bm{\phi}_{2},\bm{\phi}_{3},\bm{\phi}_{8}]$. 

Under hypothesis $\mathcal{J}$, the received signal is confined to the $K$-dimensional subspace spanned by the columns of $\bm{\Phi}_{\mathcal{J}}$ and can be written as
\begin{equation}
\bm{y} = \bm{\Phi}_{\mathcal{J}} \bm{s}_{\mathcal{J}} + \bm{n},
\qquad 
\bm{n} \sim \mathcal{CN}(\bm{0},\sigma^{2}\bm{I}),
\end{equation}
where $\bm{s}_{\mathcal{J}}\in\mathbb{C}^{K\times 1}$ contains the unknown symbol coefficients. For each hypothesis ${\mathcal{J}}$, the vector $\bm{s}_{\mathcal{J}}$ is treated as a nuisance parameter. The GLRT metric is defined as
\begin{equation}\label{eq:glrt_def}
\Lambda({\mathcal{J}})
=
\max_{\bm{s}_{\mathcal{J}}}
\; p(\bm{y} \mid \bm{s}_{\mathcal{J}}, \bm{\Phi}_{\mathcal{J}}),
\end{equation}
where the conditional likelihood is
\begin{equation}\label{eq:likelihood}
p(\bm{y} \mid \bm{s}_{\mathcal{J}}, \bm{\Phi}_{\mathcal{J}})
=
\frac{1}{(\pi \sigma^2)^L}
\exp\!\left(
-\frac{1}{\sigma^2}
\left\| \bm{y} - \bm{\Phi}_{\mathcal{J}} \bm{s}_{\mathcal{J}} \right\|_{2}^{2}
\right).
\end{equation}
Since the exponential function in \eqref{eq:likelihood} is monotonic,
maximizing the likelihood in \eqref{eq:glrt_def} is equivalent to solving
\begin{equation}
\min_{\bm{s}_{\mathcal{J}}}
\|\bm{y}-\bm{\Phi}_{\mathcal{J}}\bm{s}_{\mathcal{J}}\|_2^2.
\label{eq:ls_problem}
\end{equation}
The corresponding least-squares estimate is 
\begin{equation}
\hat{\bm{s}}_{\mathcal{J}}
=
(\bm{\Phi}_{\mathcal{J}}^{\mathrm H}\bm{\Phi}_{\mathcal{J}})^{-1}
\bm{\Phi}_{\mathcal{J}}^{\mathrm H}\bm{y}.
\label{eq:ls_solution}
\end{equation}
Using \eqref{eq:ls_solution}, the minimum of
\eqref{eq:ls_problem} is attained at
$\bm{s}_{\mathcal{J}} = \hat{\bm{s}}_{\mathcal{J}}$, yielding
\begin{equation}
\min_{\bm{s}_{\mathcal{J}}}
\|\bm{y}-\bm{\Phi}_{\mathcal{J}}\bm{s}_{\mathcal{J}}\|_2^2
=
\|\bm{y}-\bm{\Phi}_{\mathcal{J}}\hat{\bm{s}}_{\mathcal{J}}\|_2^2.
\label{eq:residual_step1}
\end{equation}
Substituting \eqref{eq:ls_solution} into the signal model yields
\begin{equation}
\bm{\Phi}_{\mathcal{J}} \hat{\bm{s}}_{\mathcal{J}}
=\bm{\Phi}_{\mathcal{J}}
(\bm{\Phi}_{\mathcal{J}}^{\mathrm H}\bm{\Phi}_{\mathcal{J}})^{-1}
\bm{\Phi}_{\mathcal{J}}^{\mathrm H} \bm{y}
\triangleq \bm{Z}_{\mathcal{J}} \bm{y},
\label{eq:projection_matrix}
\end{equation}
where $\bm{Z}_{\mathcal{J}}$ denotes the orthogonal projection matrix onto
$\mathrm{span}(\bm{\Phi}_{\mathcal{J}})$. Substituting \eqref{eq:projection_matrix} into
\eqref{eq:residual_step1}, we obtain
\begin{equation}
\|\bm{y}-\bm{\Phi}_{\mathcal{J}}\hat{\bm{s}}_{\mathcal{J}}\|_2^2
=
\|\bm{y}-\bm{Z}_{\mathcal{J}}\bm{y}\|_2^2
=
\|\bm{y}\|_2^2
-
\bm{y}^{\mathrm H}\bm{Z}_{\mathcal{J}}\bm{y}.
\label{eq:residual_energy}
\end{equation}
Since $\|\bm y\|_2^2$ is independent of $\mathcal{J}$, minimizing \eqref{eq:ls_problem} is therefore equivalent to maximizing $\bm y^{\mathrm H}\bm Z_{\mathcal{J}}\bm y$. Therefore, the GLRT metric of \eqref{eq:glrt_def} can be written as
\begin{equation}
\Lambda({\mathcal{J}})=\bm{y}^{\mathrm H}\bm{Z}_{\mathcal{J}}\bm{y}
=
\bm{y}^{\mathrm H}
\bm{\Phi}_{\mathcal{J}}
(\bm{\Phi}_{\mathcal{J}}^{\mathrm H}\bm{\Phi}_{\mathcal{J}})^{-1}
\bm{\Phi}_{\mathcal{J}}^{\mathrm H}
\bm{y}.
\label{eq:glrt_explicit_o}
\end{equation}

\subsection{Proposed Two-Stage Near-ML Decoder for SCMA-SVC}


\begin{algorithm}[t]
\caption{Two-Stage Near-ML Decoding for SCMA-SVC}
\label{TS_SCMA_algorithm}
\begin{algorithmic}[1]
\REQUIRE Received vector $\bm{y}\in\mathbb{C}^{L\times 1}$, channel matrix $\bm{H}$, channel-aware spreading dictionary $\bm{\Phi}$, candidate index sets $\mathcal{G}$, and selection parameter $\beta$.
\ENSURE Estimated information bits $\hat{\bm{m}}$.

\vspace{0.3em}

\vspace{0.3em}
\STATE \textbf{Stage 1 (GLRT-Based Index Selection):}
\FOR{each group ${\mathcal{J}}\in\mathcal{G}$ with index set $\{u_0,u_1,\ldots,u_{K-1}\}$}
    \STATE Form $\bm{\Phi}_{\mathcal{J}} \triangleq [\bm{\phi}_{u_0},\bm{\phi}_{u_1},\ldots,\bm{\phi}_{u_{K-1}}]\in\mathbb{C}^{L\times K}$.
    \STATE Compute $\bm{v}_{\mathcal{J}} \triangleq \bm{\Phi}_{\mathcal{J}}^H\bm{y}$ and $\bm{R}_{\mathcal{J}} \triangleq \bm{\Phi}_{\mathcal{J}}^H\bm{\Phi}_{\mathcal{J}}$.
    \STATE Evaluate the GLRT score
    \begin{equation*}
    \begin{aligned}
    &\Lambda({\mathcal{J}}) = \bm{v}_{\mathcal{J}}^H \bm{R}_{\mathcal{J}}^{-1}\bm{v}_{\mathcal{J}}\\
&(\text{equivalently } \Lambda(\mathcal{J})=\bm{y}^{\mathrm H}
\bm{\Phi}_{\mathcal{J}}
(\bm{\Phi}_{\mathcal{J}}^{\mathrm H}\bm{\Phi}_{\mathcal{J}})^{-1}
\bm{\Phi}_{\mathcal{J}}^{\mathrm H}
\bm{y}).
    \end{aligned}
    \end{equation*}
\ENDFOR
\STATE $\mathcal K_\beta=\{j_0,j_1,\ldots,j_{\beta-1}\} \leftarrow$ the $\beta$ indices $j$ with the largest $\Lambda(\mathcal J)$.
\vspace{0.3em}
\STATE \textbf{Stage 2 (Fine Near-ML Search within Selected Groups):}
\STATE Initialize $d_{\min}\leftarrow+\infty$ and $\hat{m}\leftarrow 0$.
\FOR{$j \in \mathcal{K}_\beta$}
    \FOR{$q=0,\ldots,Q-1$}
        \STATE $m \leftarrow jQ+q$
        \STATE Compute the Euclidean distance
        \begin{equation*}
            d(m)=\|\bm{y}-\bm{H}{\bm{x}}_m\|_{2}^{2}.
        \end{equation*}
        \IF{$d(m)<d_{\min}$}
            \STATE $d_{\min}\leftarrow d(m)$, $\hat{m}\leftarrow m$.
        \ENDIF
    \ENDFOR
\ENDFOR
\STATE \textbf{return} $\hat{m}$.
\STATE Map $\hat{m}$ to the estimated information bits $\hat{\bm{m}}$.
\end{algorithmic}
\end{algorithm}

Based on the GLRT metric in \eqref{eq:glrt_explicit_o}, we propose a two-stage near-ML decoder for SCMA-SVC. The overall decoding process includes two stages as follows:

\subsubsection{Stage 1 (GLRT-Based Index Selection)}

Let $\mathcal{G}=\{{\mathcal{J}}_0,{\mathcal{J}}_1,\ldots,{\mathcal{J}}_{J-1}\}$ denote the set of all candidate index sets known at the receiver in SCMA-SVC. For each index set ${\mathcal{J}}\in\mathcal{G}$, we evaluate the GLRT metric
\begin{equation}
\Lambda({\mathcal{J}})
=
\bm{y}^{\mathrm H}
\bm{\Phi}_{\mathcal{J}}
(\bm{\Phi}_{\mathcal{J}}^{\mathrm H}\bm{\Phi}_{\mathcal{J}})^{-1}
\bm{\Phi}_{\mathcal{J}}^{\mathrm H}
\bm{y}.
\label{eq:stage1_metric}
\end{equation}
Then, the $\beta$ indices ${\mathcal{J}}\in\mathcal{G}$ with the largest GLRT scores $\Lambda({\mathcal{J}})$ are selected and denoted by $\mathcal{K}_\beta$, where $\beta \ll J$. This corresponds to selecting the index sets whose associated subspaces capture the largest projection energy of the received signal.

\subsubsection{Stage 2 (Fine Near-ML Search within Selected Groups)}

After {\it Stage~1}, a restricted ML detection is performed over the codewords associated with the selected index sets $\mathcal{K}_\beta$. Since each index set corresponds to $Q$ constellation symbols, the resulting search space contains $Q\beta$ candidate codewords. Compared to the conventional ML detector in \eqref{ML_Decoder}, the search space is significantly reduced from $2^{b_t}$ to $Q\beta$.

The detailed implementation steps of the proposed two-stage decoder are summarized in {\it Algorithm~\ref{TS_SCMA_algorithm}}.

\subsection{Complexity Analysis of Two-Stage Near-ML Decoder for SCMA-SVC}

We analyze the computational complexity of the proposed two-stage near-ML decoder and compare it with the conventional ML detector. For each index set $\mathcal{J}$, since the sparsity level $K$ is fixed and small in SCMA-SVC, the per-index-set complexity scales linearly with $L$, i.e., $O(L)$. Since the GLRT metric is evaluated for all $J$ candidate index sets, the total complexity of Stage 1 is $C_{\text{Stage 1}} = O(JL)$. After Stage~1, only the top-$\beta$ candidate indices are retained, where $\beta \ll J$. Each selected index corresponds to $Q$ constellation symbols, resulting in $Q \beta$ candidate codewords. For each candidate codeword, an $L$-dimensional Euclidean distance is evaluated. Hence,
$C_{\text{Stage 2}} = O(Q\beta L)$. Therefore, the overall complexity is $C_{\text{total}}=C_{\text{Stage 1}}+C_{\text{Stage 2}} = O(JL + Q\beta L)$. In contrast, the conventional ML detector evaluates all $JQ = 2^{b_t}$ candidate codewords, each requiring an $L$-dimensional distance computation, resulting in a complexity of $O(QJ L)$. When $\beta \ll J$, the proposed two-stage decoder reduces the search complexity from $O(QJ L)$ to $O(Q\beta L+JL)$ while preserving near-ML detection performance.

\section{Simulation Results}\label{Sec:Simulation}

\begin{figure*}[!t]
    \includegraphics[width=0.9\textwidth]{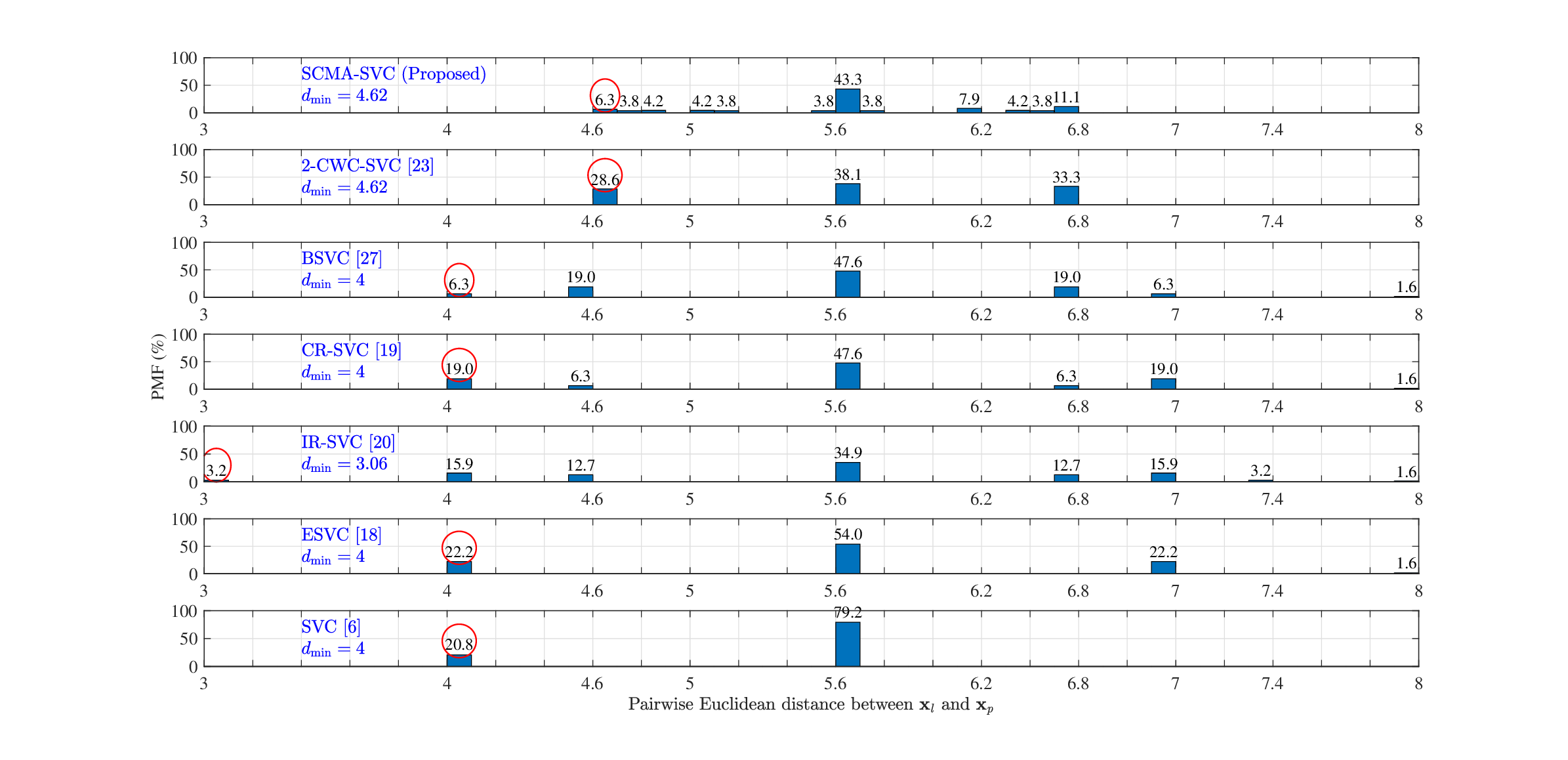}
    \caption{Euclidean distance spectrum of the codewords for the proposed SCMA-SVC and several SVC variants, illustrated by the PMFs of the 
pairwise Euclidean distances $\|\bm{x}_l-\bm{x}_p\|_2$ between distinct codewords. The MED of each codebook is indicated by~$d_{\text{min}}$, while the red-circled percentages denote the proportion of codeword pairs whose pairwise Euclidean distance equals the MED. ($b_t=6$, $L=16$)} \label{Distance_Comparison_Sim_1}
\end{figure*}
In this section, we evaluate the BLER performance of the proposed SCMA-SVC scheme over Gaussian and Rayleigh fading channels under ML decoding to characterize the inherent performance of the codebook. We compare the proposed SCMA-SVC with several existing SVC variants, including SVC \cite{Ji_18_first}, ESVC \cite{Kim_20}, CR-SVC \cite{Zhang_22}, IR-SVC \cite{Zhang_23}, BSVC \cite{Zhang_25_block}, and 2-CWC-SVC \cite{Mow_24_CWC}. Normal approximation\text{\,\begingroup\renewcommand{\thefootnote}{\fnsymbol{footnote}}\footnotemark[3]\endgroup} \cite{Vincent_10} is also included as a finite-blocklength performance benchmark. In the simulations, the channel matrix $\bm H$ is assumed to be known at the decoder for all schemes. We first consider a short-packet configuration with $b_t=6$ information bits and codeword length of $L=16$. We employ the codebook $\mathcal{X}$ of the proposed SCMA-SVC from {\it Example~\ref{eg_SCMA-SVC_6_bits}}. The only modification is that a diagonal phase rotation matrix $\bm P$ is introduced, where $\phi_0=6.04, \phi_1=3.39, \phi_2=0.19, \phi_3=4.37, \phi_4=3.26, \phi_5=0.37, \phi_6=5.59, \phi_7=2.07, \phi_8=1.44, \phi_9=0.71, \phi_{10}=1.95$, and $\phi_{11}=1.43$ according to {\it Theorem~\ref{thm:full_div_random_phase}}. For fair comparison, although different SVC variants employ different sparse mappings, which lead to different codebook dimensions, we maintain a consistent codeword length across all methods by adopting the same orthogonal codebook matrix $\bm{C}$. The corresponding codebook matrix is obtained by taking the first $N$ columns of $\bm{C}$. The proposed SCMA-SVC employs a multi-dimensional constellation of size $Q=4$, while the SVC, ESVC, CR-SVC, IR-SVC, and BSVC schemes adopt $4$-QAM constellations. For the 2-CWC-SVC scheme, a 2-CWC can be constructed from a CWC by assigning $\{\pm1\}$-valued symbols to each nonzero entry of the codeword~\cite{Mow_24_CWC}. In our setting, since the sparse pattern vectors in the indicator matrix $\bm{F}$ of {\it Example~\ref{eg_SCMA-SVC_6_bits}} can be regarded as a binary CWC, the sparse vector set $\mathcal{S}$ is constructed by assigning the columns of the constellation matrix $\bm{A}$ to the nonzero positions specified by $\bm{F}$ where
\begin{equation}
\bm{A}=
\begin{bmatrix}
1 & 1 & -1 & -1\\
1 & -1 & -1 & 1\\
1 & -1 & 1 & -1
\end{bmatrix}.
\end{equation}
\begin{figure}[!t]
    \centering
    \begin{center}
        \includegraphics[width=0.45\textwidth]{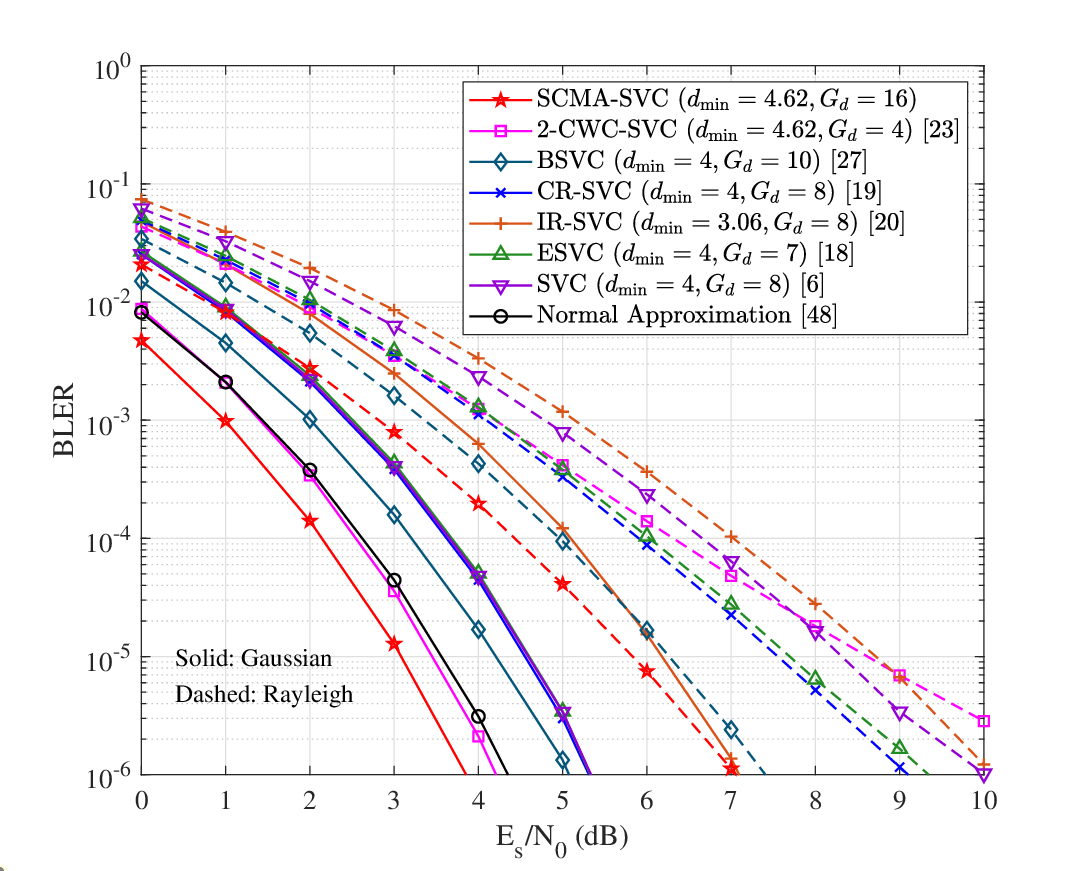}
        \caption{BLER performance comparison of different SVC-based codebooks under ML decoding over Gaussian and Rayleigh channels with $b_t=6$ information bits and codeword length $L=16$.}\label{BLER_SCMA_SVC_6_16}
    \end{center}
\end{figure}
\begingroup
\renewcommand{\thefootnote}{\fnsymbol{footnote}}
\footnotetext[3]{The normal approximation is a finite-blocklength information-theoretic approximation that estimates the maximum achievable coding rate for a given blocklength and target error probability~\cite{Vincent_10}.}
\endgroup 
\begin{figure*}[!t]
    \includegraphics[width=0.9\textwidth]{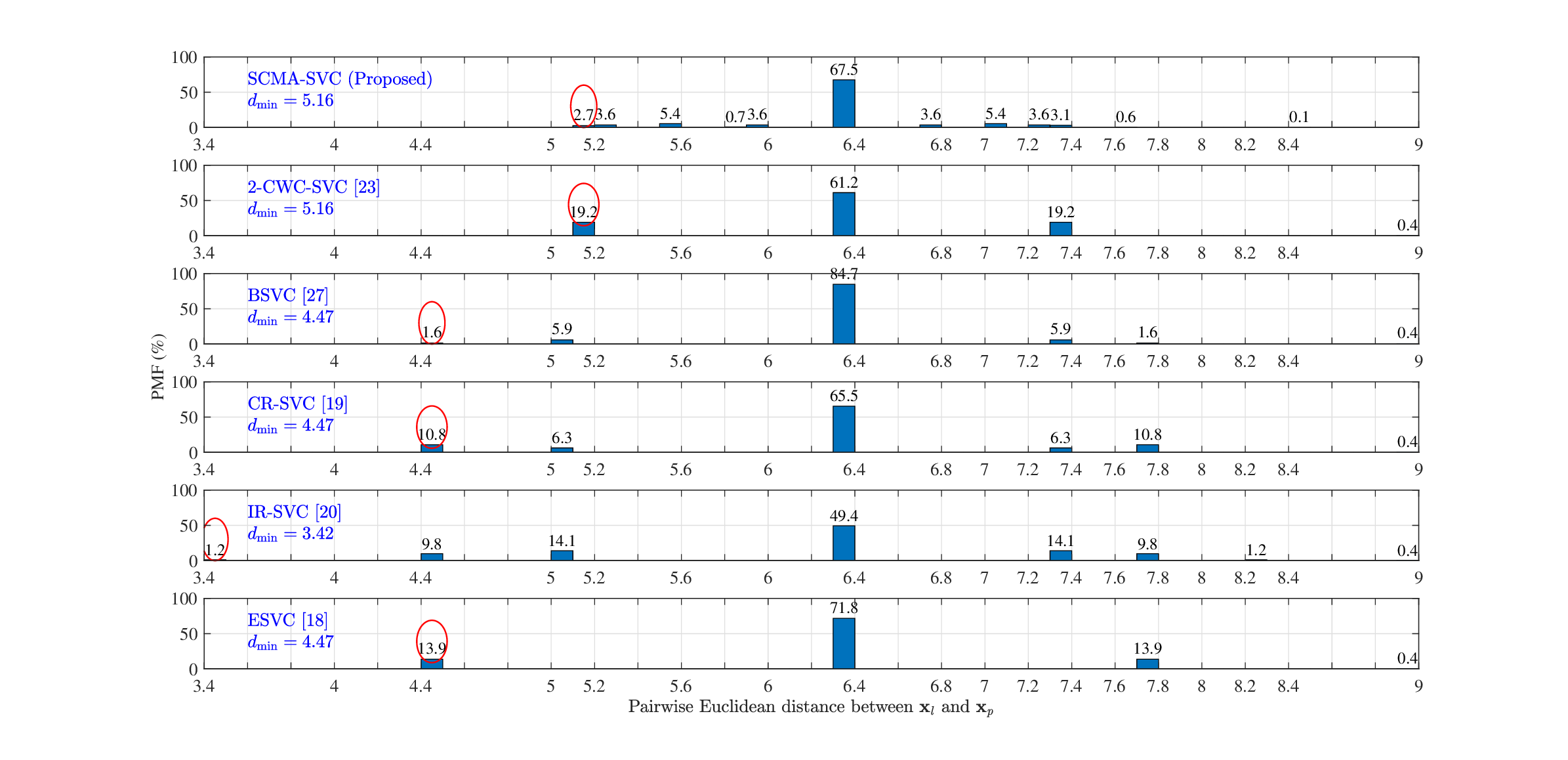}
    \caption{Euclidean distance spectrum of the codewords for the proposed SCMA-SVC and several SVC variants, illustrated by the PMFs of the 
pairwise Euclidean distances $\|\bm{x}_l-\bm{x}_p\|_2$ between distinct codewords. The MED of each codebook is indicated by~$d_{\text{min}}$, while the red-circled percentages denote the proportion of codeword pairs whose pairwise Euclidean distance equals the MED. ($b_t=8$, $L=20$)} \label{Distance_Comparison_Sim_2}
\end{figure*}
\hspace{-1em}The resulting sparse vector set $\mathcal{S}$ corresponds to a structured subset of the 2-CWC construction in~\cite{Mow_24_CWC}. Fig.~\ref{Distance_Comparison_Sim_1} shows the distribution of the pairwise Euclidean distances $\|\bm{x}_l-\bm{x}_p\|_2$ between distinct codewords for the proposed SCMA-SVC and several SVC variants, which characterizes the Euclidean distance spectrum of the corresponding codebooks. The distribution is represented by the probability mass function (PMF). By enumerating all distinct codeword pairs in the codebook~$\mathcal{X}$, the Euclidean distance is computed for each pair. The PMF is then obtained by counting the number of pairs associated with each distance value and normalizing the counts by the total number of pairs. Both the proposed SCMA-SVC and the 2-CWC-SVC achieve the largest MED $d_{\min}=4.62$ among all considered SVC schemes. This is because the codebooks in these two schemes are constructed with consideration of the MED, whereas the other SVC variants do not consider the MED in their codebook design. Although the 2-CWC-SVC attains the same MED, the number of codeword pairs corresponding to $d_{\min}$ in SCMA-SVC is smaller, indicating a more favorable Euclidean distance spectrum.

Fig.~\ref{BLER_SCMA_SVC_6_16} demonstrates the BLER performance of the proposed SCMA-SVC and several SVC variants under the ML decoder over Gaussian and Rayleigh fading channels. We observe that the  proposed SCMA-SVC outperforms the existing SVC variants across the entire SNR range in both Gaussian and Rayleigh fading channels. For the Gaussian channel, the proposed SCMA-SVC achieves the largest MED and a more favorable distance spectrum, thereby lowering the pairwise error probability. For a target BLER of $10^{-5}$, the proposed SCMA-SVC achieves $0.37$ dB and $1.12$ dB SNR gains over 2-CWC-SVC and BSVC, respectively. For the Rayleigh fading channel, the proposed SCMA-SVC also achieves superior BLER performance compared with the existing SVC variants. This is because the proposed SCMA-SVC achieves the full diversity order $G_d = 16$ through phase rotation. In contrast, the other SVC schemes do not guarantee full diversity in fading channels.

\begin{figure}[!t]
    \centering
    \begin{center}
        \includegraphics[width=0.45\textwidth]{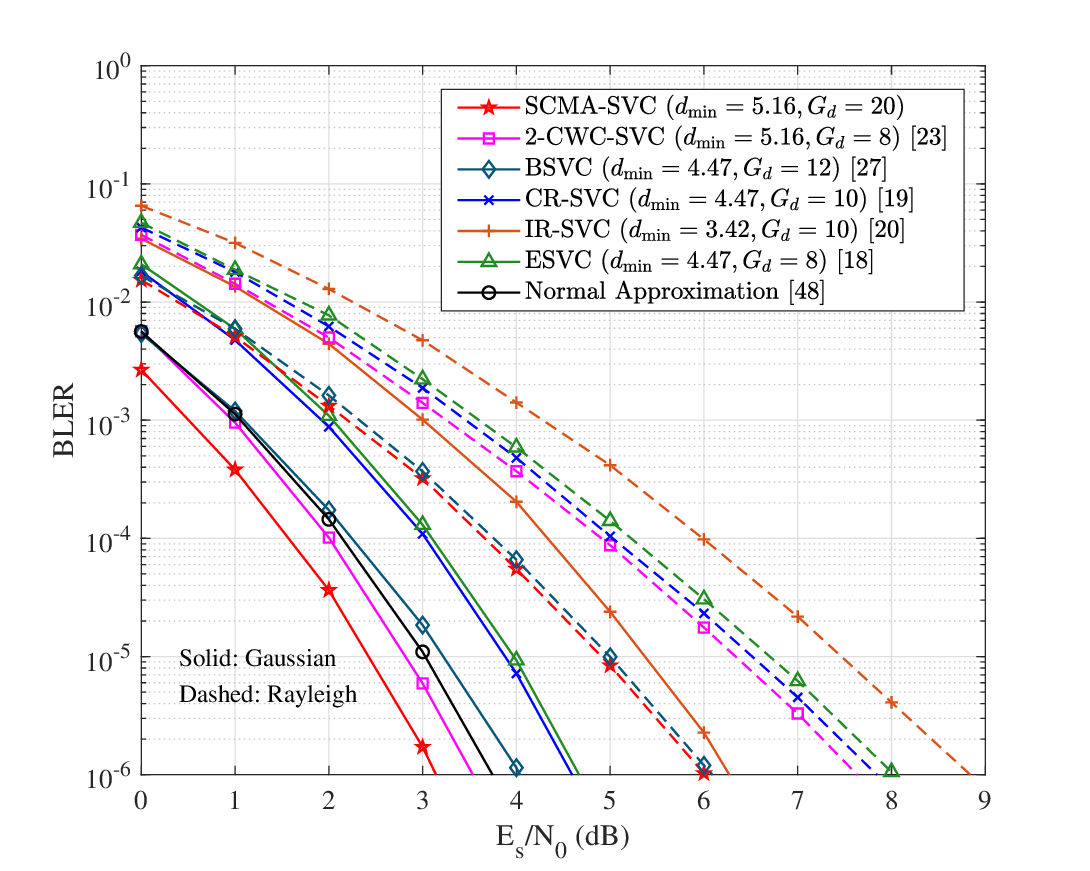}
        \caption{BLER performance comparison of different SVC-based codebooks under ML decoding over Gaussian and Rayleigh channels with $b_t=8$ information bits and codeword length $L=20$.}\label{BLER_SCMA_SVC_8_20}
    \end{center}
\end{figure}
Next, we consider another configuration with $b_t=8$ information bits and codeword length of $L=20$. The proposed SCMA-SVC scheme is simulated with parameters $b_0=5$, $b_1=3$, $N=20$, $J=32$, $K=3$, $Q=8$, and $\alpha=20$. The indicator matrix $\bm{F}$ of size $20 \times 32$ with girth $6$ is adopted. The multi-dimensional constellation $\bm{A}_{\text{MC}}$ of size $3 \times 8$, with an MED $d_{\min}(\mathcal{A}_{\text{MC}})=1.313$ as given in (\ref{A_MC_3_8}), is employed. A bipolar orthogonal codebook matrix $\bm{C}$ of size $20 \times 20$ satisfying $\bm{C}^{H}\bm{C}=20\bm{I}_{20}$ and a phase rotation matrix $\bm P$ of size $20 \times 20$ are employed. As a result, the generated codebook $\mathcal{X}$ achieves an MED $d_{\min}=\min\{1.155,1.313\} \times \sqrt{20}=5.16$ (refer to {\it Remark~\ref{Remark_SCMA_dim_form}}). We compare the performance of SCMA-SVC with ESVC \cite{Kim_20}, CR-SVC \cite{Zhang_22}, IR-SVC \cite{Zhang_23}, BSVC \cite{Zhang_25_block}, and 2-CWC-SVC \cite{Mow_24_CWC}, constructed in the same manner as in the previous simulation. As shown in Fig.~\ref{Distance_Comparison_Sim_2}, the proposed SCMA-SVC achieves the largest MED and exhibits a more favorable Euclidean distance spectrum among all considered SVC schemes, with fewer codeword pairs occurring at the minimum distance. Therefore, over the Gaussian channel, SCMA-SVC achieves superior BLER performance compared with the existing SVC variants, as shown in Fig.~\ref{BLER_SCMA_SVC_8_20}. For the Rayleigh fading channel, the proposed SCMA-SVC also achieves the best BLER performance among all considered SVC schemes. This is because SCMA-SVC achieves the full diversity order $G_d=20$ through phase rotation. Therefore, SCMA-SVC provides more reliable performance under both Gaussian and Rayleigh fading channels.
\begin{figure*}[!t]
\centering
\begin{equation}\label{A_MC_3_8}
\resizebox{\textwidth}{!}{$
\bm{A}_{\text{MC}} =
\begin{bmatrix}
0.4082 - 0.4082i &  0.5774 - 0.0000i & 0.4082 + 0.4082i & 0.0000 + 0.5774i & -0.4082 + 0.4082i & -0.5774 + 0.0000i & -0.4082 - 0.4082i & -0.0000 - 0.5774i \\
0.2209 - 0.5334i & -0.2209 + 0.5334i & -0.5334 + 0.2209i &  0.5334 - 0.2209i & 0.2209 + 0.5334i & -0.2209 - 0.5334i & -0.5334 - 0.2209i &  0.5334 + 0.2209i\\
0.5774 - 0.0000i &  0.4082 - 0.4082i & -0.4082 + 0.4082i & -0.5774 + 0.0000i & 0.4082 + 0.4082i  & 0.0000 + 0.5774i & -0.0000 - 0.5774i & -0.4082 - 0.4082i
\end{bmatrix}
$}.
\end{equation}
\end{figure*}

Finally, we evaluate the BLER performance of the proposed two-stage near-ML decoder described in {\it Algorithm~\ref{TS_SCMA_algorithm}}. The optimal ML detector is included as a benchmark. The same configuration with $b_t=6$ information bits and codeword length $L=16$ is adopted. Fig.~\ref{TS_BLER} shows the BLER performance of the proposed two-stage near-ML decoder for different values of $\beta$ under both Gaussian and Rayleigh fading channels. It can be observed that the performance of the proposed decoder approaches that of the optimal ML detector as $\beta$ increases. This is because a larger $\beta$ retains more candidate index sets after Stage 1, which increases the probability that the correct index set is included in the candidate set. Moreover, the proposed low-complexity two-stage near-ML decoder incurs only a slight SNR loss compared with the optimal ML detector. At a target BLER of $10^{-5}$, the SNR loss is $0.07$ dB for the Gaussian channel with $\beta=3$, and $0.18$ dB for the Rayleigh fading channel with $\beta=5$. These results demonstrate that the proposed two-stage decoder can effectively reduce the search complexity while maintaining near-ML detection performance by properly selecting the parameter $\beta$ and exploiting the structured SCMA-SVC codebook design.

\section{Conclusion and Future Work}\label{Sec:Conclusion}
\begin{figure}[!t]
    \centering
    \hspace{-0.6em}
    \includegraphics[width=0.45\textwidth]{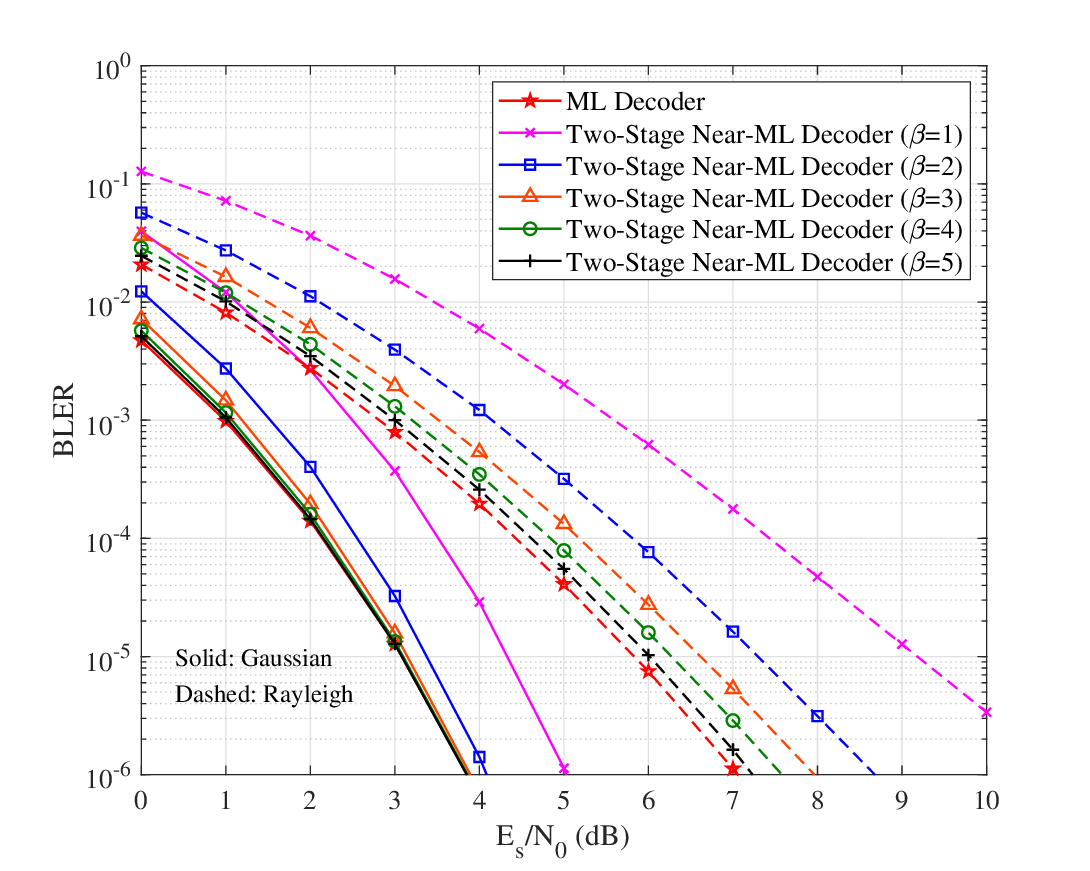}
    \caption{BLER performance comparison of the proposed two-stage near-ML decoder for SCMA-SVC with different values of $\beta$.}\label{TS_BLER}
\end{figure}
This work is dedicated to the study of SCMA inspired SVC for enhanced short-packet URLLC transmission. The major novelty of this work stems from the extraction of multiuser coding gain and constellation shaping gain of SCMA which has not been reported in the literature before, to the best of our knowledge. First of all, the relationship between the MED of the sparse vector set and that of the transmitted codebook is established in {\it Theorem~\ref{dmin_theorem}}. Such a relationship has provided a useful design insight that, when an orthogonal codebook matrix is employed, both the MED and the Euclidean distance spectrum of the transmitted codebook are determined by the sparse vector design.  By employing a large-girth indicator matrix and a distance-optimized multi-dimensional constellation, the MED of the resulting SCMA-SVC codebook can be precisely characterized by the sparse pattern structure and the multi-dimensional mother constellation as mentioned in {\it Remark~\ref{Remark_SCMA_dim_form}}. Furthermore, by introducing random phase rotation, it has been shown that the proposed SCMA-SVC scheme achieves full diversity order over Rayleigh fading channels, as shown in {\it Theorem~\ref{thm:full_div_random_phase}}. As a result, the constructed codebook simultaneously possesses a large MED and full diversity, which significantly improves reliability over both Gaussian and Rayleigh fading channels. To address the high computational complexity of exhaustive ML detection, we have developed a low-complexity near-ML decoder by applying the GLRT with the structured sparsity of the SCMA-SVC codebook. The proposed two-stage decoder significantly reduces the number of Euclidean distance evaluations while maintaining near-ML detection performance. Simulation results have demonstrated that the proposed SCMA-SVC scheme achieves superior BLER performance compared with several existing SVC variants over both Gaussian and Rayleigh fading channels. These results confirm that the proposed SCMA-SVC framework provides an effective coding approach for enhancing the reliability of short-packet transmissions in URLLC scenarios. 

{\it Future Work:} Although the indicator matrices of SCMA-SVC are designed with a large girth to enhance sparse pattern separation and improve the Euclidean distance spectrum, the structural design of indicator matrices remains largely unexplored. An interesting future research direction is to investigate structured indicator matrix designs and study their impact on the decoding performance of SCMA-SVC.

%




\balance
\bibliographystyle{IEEEtran}
\bibliography{SCMA_SVC}

\vfill

\end{document}